\definecolor{slategray}{RGB}{112,138,144}
\newcommand{\E}{{\mathbb{E}}}
\newcommand{\I}{{\rm i}}
\newcommand{\dd}{{\rm d}}
\newcommand{\beq}{\begin{equation}}
\newcommand{\eeq}{\end{equation}}
\newcommand{\neeq}{\nonumber\end{equation}}
\newcommand{\sfrac}[2]{{\textstyle \frac{#1}{#2}}}
\newcommand{\del}{\partial}
\newcommand{\cH}{{\mathcal H}}
\newcommand{\bC}{{\mathbb C}}
\newcommand{\bN}{{\mathbb N}}
\newcommand{\bR}{{\mathbb R}}
\newcommand{\bX}{{\mathbb X}}
\newcommand{\bZ}{{\mathbb Z}}
\newcommand{\veps}{\varepsilon}
\newcommand{\vphi}{\varphi}
\newcommand{\ps}{\psi}
\newcommand{\psq}{{\bar\psi}}
\newcommand{\cP}{{\mathcal{P}}}
\newtheorem{theo}{\textsc{Theorem}}[section]
\newtheorem{lemma}{\textsc{Lemma}}[section]
\newtheorem{defi}{\textsc{Definition}}[section]
\newtheorem{prop}{\textsc{Proposition}} [section]
\newtheorem{rem}{\textsc{Remark}}[section]
\numberwithin{equation}{section}
\DeclareMathOperator{\xx}{{\bf{x}}}
\newcommand{\reftitel}[1]{{\sl #1}}
\newcommand{\enquote}[1]{``#1''}
\newcommand{\alatt}{\veps}
\newcommand{\mass}{{\rm m}}
\newcommand{\BQ}{Q}
\newcommand{\Ftn}[1]{\tilde F^{(#1)}}  
\newcommand{\Fphz}{F^{(0)}}
\newcommand{\phz}{\phi^{(0)}}
\begin{document}

\title[]{The majorant method for the fermionic effective action}
\author{Wilhelm Kroschinsky, Domingos H. U. Marchetti, Manfred Salmhofer} 
\address{
Institute for Applied Mathematics, Hausdorff-Center for Mathematics, University of Bonn, Endenicher Allee 60, 53115 Bonn, Germany
 --- 
Instituto de F\' isica, Universidade de São Paulo, Brazil
 --- 
Institut f\" ur Theoretische Physik, Universit\" at Heidelberg,
Philosophenweg~19, 69120 Heidelberg, Germany}

\email{kroschinsky@iam.uni-bonn.de,marchett@if.usp.br,salmhofer@uni-heidelberg.de}

\date{\today}

\begin{abstract}
\noindent
We revisit the problem of controlling Polchinski's equation by the solution of an associate Hamilton-Jacobi equation which determines a norm majorant for the fermionic effective action. This method, referred to as the majorant method, was first introduced by D. Brydges and J. Wright in 1988, but its original formulation contains a gap which has never been addressed. We overcome this gap and show that the majorant equation and its existence condition are analogous to the ones originally obtained by Brydges and Wright. As an application of the method, we investigate a fermion model with a local quartic interaction. 
\end{abstract}

\maketitle

\section{Introduction}

Polchinski's equation \cite{Polch} has revolutionized proofs of perturbative renormalizability, by shifting the emphasis in proofs from individual graphs to the Green functions of the quantum field theory. The structure of Polchinski's equation is that of a nonlinear heat equation in field space, which is infinite-dimensional in a continuum theory and very high-dimensional in most regularizations, such as the lattice regularization. 

Starting with \cite{KKS}, his method has been developed to the point where almost all results in perturbative renormalization have been reproven \cite{KeKoQED,KeKoZi,KoMue}, and many have been extended significantly \cite{HoHo,KoHoOPE,HoStM,Sa98,MSbook}. 
The simplicity of the strategy of proof makes one want to use it also nonperturbatively, i.e.\ at small coupling, but not in the framework of formal power series. One case where this has been done is in the field-theoretical formulation of the KAM problem \cite{DeSiKupi}, which is, however, the case where the loops of quantum field theory are absent. 

Pioneering work relating Polchinski's equation and well-established methods of constructive quantum field theory was first done by Brydges and Kennedy. They showed that tree and polymer expansions can be obtained as explicit (convergent) series solutions of the differential equation, and that one can give a majorant for the solution of Polchinski's equation that satisfies a Hamilton-Jacobi equation. In a further paper, Brydges and Wright studied this majorant for fermionic systems \cite{BryWri}. It was later discovered that the proof in \cite{BryWri} contains a gap \cite{BryWriErr}, and it has since remained an open problem to complete the proof. This is the first purpose of the present paper. 

In the mean time, much work was devoted to simplifying and extending the fermionic tree expansions. While many works rely on the Brydges-Battle-Federbush interpolation, one can also use Polchinski's equation to generate an interpolation formula that has the necessary positivity properties. Continuous scale decompositions and flows were also used to construct the Gross-Neveu model in \cite{DR-GN} and two-dimensional many-fermion systems in \cite{DR1,DR2}. 

These constructions are, however, not as simple as the original Polchinski proof -- partly this is an inherent difference between perturbative and non-perturbative methods, but the strategy appears closer to that of tree expansions. The second purpose of this paper is to indicate how Polchinski-type estimates may be done using the majorant method. This is only an exemplification corresponding to the restriction to `completely convergent graphs' in other approaches; a full construction of fermionic models is possible along these lines, but is deferred to future work. 

Let us say a few words about how this paper is organized. For the reader's convenience, we start by recalling some standard definitions and results from the theory of Grassmann algebra. This is done in Section \ref{grassmannoverview}. In Section \ref{norms}, we quickly introduce the relevant norms to be used on the Grassmann algebra, which are in accordance with Brydges and Wright's paper. Section \ref{rgtransformation} is devoted to introducing a the renormalization group transformation and its connection to Polchinsky's equation. Our main result, Theorem \ref{newmajorant}, is stated and proved in Section \ref{majorantmethod}, where the Majorant Method is developed. Finally, some applications of the method using quartic interactions as initial condition are then discussed in Section \ref{quarticapplications}. 

Note added: After completion of this paper we learned about the related works of Pawel Duch \cite{Duch}, who constructs the two-dimensional Gross-Neveu model using the Polchinski equation, and Rafael Greenblatt \cite{Greenblatt}.

\section{Grassmann Algebras - An Overview}\label{grassmannoverview}

Fermionic systems are naturally described in terms of Grassmann algebras \cite{berezin2012method}. For this reason, it is convenient to reserve this first section to introduce some notations as well as some standard, yet relevant, concepts regarding these algebras.

Let $V$ and $Z$ be vector spaces over $\mathbb{K} = \mathbb{R}$ or $\mathbb{C}$ and $p \in \mathbb{N}$ with $p \ge 2$. We recall that a $p$-linear map $T: \overbrace{V\times \cdots \times V}^{\text{$p$ times}} \to Z$ is called $p$-alternating if
\begin{eqnarray}
T(\varphi_{1},...,\varphi_{i},...,\varphi_{k},...,\varphi_{p}) = -T(\varphi_{1},...,\varphi_{k},...,\varphi_{i},...,\varphi_{p}) \label{GA1}
\end{eqnarray}
holds for every $\varphi_{1},...,\varphi_{i},...,\varphi_{k},...,\varphi_{p}\in V$. 

\begin{defi}\label{exteriorpower} A $p$-alternating map $\wedge^{p}: V \times \cdots \times V \to Z$ is said to satisfy the universal property for alternating maps if given another vector space $W$ and a $p$-alternating map $T: V \times \cdots \times V \to W$, there exists a unique linear map $\wp : Z \to W$ such that $T = \wp \circ \wedge^{p}$.

Under these conditions, the ordered pair $(Z,\wedge^{p})$ is called a $p$-th exterior power of $V$.
\end{defi}

If $(Z,\wedge^{p})$ is a $p$-th exterior power of $V$, it is customary to write $Z = \bigwedge^{p}V$. This already embodies the $\wedge^{p}$ dependence, so we refer to $\bigwedge^{p}V$ itself as the $p$-th exterior power of $V$. 

We remark that the universal property implies that decomposable vectors of the form $\wedge^{p}(\varphi_{1},...,\varphi_{p}) \equiv \varphi_{1}\wedge \cdots \wedge \varphi_{p}$ generate $\bigwedge^{p} V$ \cite{Greub}, that is, $\operatorname{Span}\wedge^{p} = \bigwedge^{p}V$. Notice that, because $\bigwedge^{p}$ is $p$-alternating, such decomposable vectors are identically zero whenever $\varphi_{i} = \varphi_{j}$ for at least one pair of indices $i \neq j$. In particular, $\varphi \wedge \varphi \equiv \varphi^{2} = 0$ holds true for every $\varphi \in V$. 

The $p$-th exterior power of $V$ can be concretely realized as the quotient space of $\bigotimes^{p}V := \overbrace{V\otimes \cdots \otimes V}^{\text{$p$ times}}$ with the space of all decomposable tensors $\varphi_{1}\otimes \cdots \otimes \varphi_{p}$ with at least one pair of coinciding entries. Moreover, $\bigwedge^{p}V$ can be naturally identified with the space of $p$-fold alternating tensors by means of the universal property. 

When $V$ is finite-dimensional with basis $\mathcal{V} = \{\psi_{1},...,\psi_{n}\}$, the set $\{\psi_{i_{1}}\wedge \cdots \wedge \psi_{i_{p}}: \hspace{0.1cm} 1 \le i_{1} < \cdots < i_{p}\le n\}$ forms a basis for $\bigwedge^{p}V$. Consequently, $\bigwedge^{p}V = \{0\}$ whenever $p > n$. 

Using the conventions $\bigwedge^{(0)} V := \mathbb{K}$ and $\bigwedge^{1}V := V$, for every fixed $p, q \in \mathbb{N}$ there exists a bilinear map $\wedge: \bigwedge^{p}V \times \bigwedge^{q}V \to \bigwedge^{p+q}V$ which associates, to each pair $(\varphi_{1},...,\varphi_{p},\phi_{1},...,\phi_{q})$, the element $\varphi_{1}\wedge \cdots \wedge \varphi_{p}\wedge \phi_{1}\wedge \cdots \wedge \phi_{q}$. Hence, the algebraic direct sum
\begin{eqnarray}
\bigwedge V := \bigoplus_{p=0}^{\infty}\bigwedge^{p}V, \label{GA2}
\end{eqnarray}
which is a vector space of dimension $2^{n}$, becomes an algebra over $\mathbb{K}$ when equipped with the product
\begin{eqnarray}
f\wedge g := \sum_{p=0}^{\infty}\bigg{(}\sum_{i+j = p}f_{i}\wedge g_{j}\bigg{)}, \label{GA3}
\end{eqnarray}
with $f = (f_{0},f_{1},...),g = (g_{0},g_{1},...) \in \bigwedge V$. This algebra is referred to as the \textit{Grassmann or Exterior algebra} over $\mathbb{K}$. Note that, by identifying each $(0,0,...,f_{p},0,...)$ with $f_{p}\in \bigwedge^{p}V$, any element $f \in \bigwedge V$ can be written as a polynomial of the form
\begin{eqnarray}
f = f(\Psi) = \sum_{p=0}^{\infty}\sum_{i_{1}<\cdots < i_{p}}\zeta_{i_{1},...,i_{p}}(\psi_{i_{1}}\wedge \cdots \wedge \psi_{i_{p}}), \label{GA4}
\end{eqnarray}
with coefficients $\zeta_{i_{1},...,i_{p}}\in \mathbb{K}$. The term $p=0$ in the sum is understood as a scalar. In short, $\bigwedge V$ is the algebra generated by $1$ and the \textit{fields} $\psi_{1},...,\psi_{n}$, which satisfy anti-commutation relations
\begin{eqnarray}
\psi_{i}\wedge \psi_{j} + \psi_{j}\wedge \psi_{i} = 0, \label{GA5}
\end{eqnarray}
for all $i,j=1,...,n$. For this reason, $1, \psi_{1},...,\psi_{n}$ are usually called the \textit{generators} of the Grassmann algebra. 

We call $f$ an \textit{even} (respectively \textit{odd}) element of the algebra if $\zeta_{i_{1},...,i_{p}} = 0$ for every odd (respectively even) $p$ in expression (\ref{GA4}). Of course, every element of the Grassmann algebra can be uniquely decomposed as a sum of an even and an odd element, since
\begin{eqnarray}
\bigwedge V \cong \bigoplus_{\text{$p$ even}}\bigwedge^{p}V \oplus \bigoplus_{\text{$p$ odd}}\bigwedge^{p}V. \nonumber
\end{eqnarray}

Sometimes we need to take other fields into account. Suppose that $U$ is another vector space over $\mathbb{K}$ with basis $\mathcal{U} = \{\theta_{1},...,\theta_{m}\}$, and let us consider the direct sum $V\oplus U$, which has basis $\{(\psi_{1},0),...,(\psi_{n},0),(0,\theta_{1}),...,(0,\theta_{m})\}$. The associated Grassmann algebra $\bigwedge (V\oplus U)$ is generated by $1$ and its basis elements. Under the identifications $(\psi_{i},0) \mapsto \psi_{i}$ and $(0,\theta_{j}) \mapsto \theta_{j}$, a $2$-form $(\psi_{i},0)\wedge (0,\theta_{j})$ becomes simply $\psi_{i}\wedge \theta_{j}$. As a consequence, any element $f \in \bigwedge (V\oplus U)$ can be written as a polynomial
\begin{eqnarray}
f = f(\Psi,\Theta) = \sum_{k+l \le n+m}\sum_{i_{1}<\cdots i_{p}}\sum_{j_{1}<\cdots < j_{l}}\zeta_{i_{1},...,i_{p},j_{1},...,j_{l}}(\psi_{i_{1}}\wedge \cdots \wedge \psi_{i_{p}}\wedge \theta_{j_{1}}\wedge \cdots \wedge \theta_{j_{l}}). \label{GA6}
\end{eqnarray}
In (\ref{GA6}), it is understood that $f$ has only $\theta$-variables when $k=0$ and only $\psi$-variables when $l=0$. Therefore, this expression provides a natural way of identifying both $\bigwedge V$ and $\bigwedge U$ as vector subspaces of $\bigwedge (V\oplus U)$. 

\begin{defi}\label{homogeneouselement} An element $f \in \bigwedge (V\oplus U)$ which depends only on the fields $\psi_{1},...,\psi_{n}$ (resp. $\theta_{1},...,\theta_{n}$) is called $\psi$-homogeneous (resp. $\theta$-homogeneous).
\end{defi}

When $n=m$, we can produce nonhomogeneous elements of $\bigwedge (V\oplus U)$ out of homogeneous ones by a simple translation of variables; if $f = f(\Psi)$ is $\psi$-homogeneous, then
\begin{eqnarray}
f(\Psi+\Theta) := f(\psi_{1}+\theta_{1},...,\psi_{n}+\theta_{n}) \label{GA7}
\end{eqnarray}
is nonhomogeneous. 

Let $f = f_{0}+ f_{1} \in \bigwedge V$, with $f_{0} \in \mathbb{C}$ and $f_{1} \in \bigoplus_{p=1}^{\infty}\bigwedge^{p}V$. Let $U \subseteq \mathbb{C}$ be an open set and $F: U \to \mathbb{C}$ be such that $f_{0} \in U$ and all derivatives $F^{(k)}(f_{0})$ of $F$ exist at $f_{0}$, for $k \le n$. Then, we can define $F(f) \equiv F(f(\Psi)) \in \bigwedge V$ by its Taylor series
\begin{eqnarray}
F(f) := F(f_{0}) + \sum_{k=1}^{\infty}\frac{F^{(k)}(f_{0})}{k!}\;{f_1}^{k}. \label{GA8}
\end{eqnarray}
In particular, we use this result to define the exponential $e^{f}$ for all $f \in \bigwedge V$ and the logarithm $\ln f$ for all those $f \in \bigwedge V$ that satisfy $f_0 > 0$. 

\subsection{Derivatives and Integrals on Grassman Algebras}\label{derivativesandintegrals}

To introduce derivatives on Grassmann algebras, we mimic the rules of partial derivatives of multivariable calculus. Suppose $\bigwedge V$ is generated by $1$ and the fields $\psi_{1},...,\psi_{n}$ and let $k \in \{1,...,n\}$ be fixed. We define the (partial) derivative of a monomial $\psi_{i_{1}}\wedge \cdots \wedge \psi_{i_{p}}$, $0 \le p \le n$, with respect to $\psi_{k}$ to be zero if $p=0$ and 
\beq
\frac{\partial}{\partial \psi_{k}}(\psi_{i_{1}}\wedge \cdots \wedge \psi_{i_{p}}) = \delta_{i_{1},k}\psi_{i_{2}}\wedge \psi_{i_{3}}\wedge  \cdots \wedge \psi_{i_{p}} -\delta_{i_{2},k}\psi_{i_{1}}\wedge \psi_{i_{3}}\wedge \cdots \wedge \psi_{i_{p}}+\cdots +(-1)^{p}\delta_{i_{p},k}\psi_{i_{1}}\wedge \psi_{i_{2}}\wedge \cdots \wedge \psi_{i_{p-1}} \label{GA11}
\eeq
otherwise, with $\delta_{n,m}$ denoting a Kronecker delta. We then extend the definition of derivative to $\bigwedge^{p}V$ by linearity.

\begin{defi}\label{derivativeswedge} If $f$ is given by (\ref{GA4}), then
\begin{eqnarray}
\frac{\partial f}{\partial \psi_{k}} := \sum_{p=0}^{\infty}\sum_{i_{1}<\cdots < i_{p}}\zeta_{i_{1},...,i_{p}}\frac{\partial}{\partial \psi_{k}}(\psi_{i_{1}}\wedge \cdots \wedge \psi_{i_{p}}). \label{GA13}
\end{eqnarray}
\end{defi}

Higher order derivatives are defined by iterative applications of partial derivatives
\begin{eqnarray}
\frac{\partial^{k}f}{\partial{\psi_{i_{1}}\cdots \partial \psi_{i_{k}}}} := \frac{\partial}{\partial \psi_{i_{1}}}\cdots \frac{\partial f}{\partial \psi_{i_{k}}}. \label{GA14}
\end{eqnarray}
In the context of Grassmann algebras, integrals and derivatives are defined in the same way.

\begin{defi}\label{integrals} The integral of $f \in \bigwedge V$ with respect to the fields $\psi_{j_{1}},...,\psi_{j_{k}}$ is 
\begin{eqnarray}
\int f d\psi_{j_{k}}\wedge \cdots \wedge d\psi_{j_{1}} := \frac{\partial}{\partial \psi_{j_{1}}}\cdots \frac{\partial f}{\partial \psi_{j_{k}}}. \label{GA15}
\end{eqnarray}
Consequently, the integral of a scalar is always zero whilst
\begin{eqnarray}
\int \psi_{m} d\psi_{n} = \delta_{m,n} \quad \mbox{and} \quad \int f d\psi_{1}\wedge \cdots \wedge d\psi_{n} = \zeta_{1,...,n}. \label{GA16}
\end{eqnarray}
\end{defi}

Before we move on, suppose $f = f_{p} \in \bigwedge^{p}V$ is given by
\begin{eqnarray}
f_{p}(x) =  \sum_{i_{1}<\cdots < i_{p}}\zeta_{i_{1},...,i_{p}}(x) (\psi_{i_{1}}\wedge \cdots \wedge \psi_{i_{p}}), \label{GA17}
\end{eqnarray}
where the coefficients $\zeta_{i_{1},...,i_{p}}$ are now differentiable functions of a real-variable $x$. In this case, we say that $f_{p}(x)$ is differentiable with respect to $x$ and define its derivative by 
\begin{eqnarray}
\frac{d f_{p}(x)}{dx} := \sum_{i_{1}<\cdots < i_{p}}\frac{d\zeta_{i_{1},...,i_{p}}(x)}{dx} (\psi_{i_{1}}\wedge \cdots \wedge \psi_{i_{p}}). \label{GA18}
\end{eqnarray}
If $f(x) = \sum_{p=0}^{\infty}f_{p}(x) \in \bigwedge V$, we define its derivative with respect to $x$ by
\begin{eqnarray}
\frac{d f(x)}{dx} := \sum_{p=0}^{\infty}\sum_{i_{1}<\cdots < i_{p}}\frac{d\zeta_{i_{1},...,i_{p}}(x)}{dx} (\psi_{i_{1}}\wedge \cdots \wedge \psi_{i_{p}}). \label{GA19}
\end{eqnarray}

Of course, many standard properties such as linearity and the chain rule for derivatives, as well as rules such as integration  by parts for integrals have natural counterparts in the context of Grassmann algebras, when these operations are defined as before. A much more detailed discussion on these topics can be found, e.g. in \cite{berezin2013introduction, MSbook, FKTbook}.

\subsection{Grassmann Gaussian Integrals}\label{gaussianintegrals}

From now on, let us fix $\mathbb{K} = \mathbb{C}$ and denote by $M_{m}(\mathbb{C})$ the space of all $m \times m$ matrices with complex entries. The determinant of a skew-symmetric matrix $A = (A_{ij}) \in M_{m}(\mathbb{C})$ must satisfy the condition $\det A = (-1)^{m}\det A$ and, for this reason, $A$ is singular when $m$ is odd. On the other hand, when $m=2n$ is even, $n \ge 1$, its determinant is the square of a polynomial $\operatorname{Pf}(A)$, called the Pfaffian of $A$, and given by
\begin{eqnarray}
\operatorname{Pf}(A) := \frac{1}{2^{n}(n)!}\;\sum_{\sigma \in S_{2n}}\operatorname{sgn}(\sigma)A_{\sigma(1)\sigma(2)}\cdots A_{\sigma(2n-1)\sigma(2n)}, \label{GA21}
\end{eqnarray}
where the sum ranges over all permutations $\sigma$ of the set $\{1,...,2n\}$. 

\begin{defi}
Let $V$ be a finite-dimensional vector space with basis $\{\psi_{1},...,\psi_{2n}\}$, $n \ge 1$, and $A = (A_{ij}) \in M_{2n}(\mathbb{C})$ skew-symmetric. The Grassmann Gaussian integral on $\bigwedge V$ with covariance (or propagator) $A$ is the linear map
\begin{eqnarray}
\bigwedge V \ni  f \mapsto \E_{\mu_{A}}[f] :=\int f(\Psi)d\mu_{A}(\Psi) \in \mathbb{C} \label{GA22}
\end{eqnarray}
determined by its correlation functions
\begin{eqnarray}
 \int \psi_{i_{1}}\wedge \cdots \wedge \psi_{i_{p}}d\mu_{A}(\Psi) := \operatorname{Pf}[A_{i_{k},i_{l}}]_{1 \le k,l \le p}. \label{GA23}
\end{eqnarray}
Notice that the pfaffian appearing in the right hand side of (\ref{GA23}) is zero when $p$ is an odd number. 
\end{defi}

We are usually interested in the case where $A$ is not only skew-symmetric, but also invertible. When this is the case, one has
\begin{eqnarray}
\E_{\mu_{A}}[f] = \operatorname{Pf}(A)\int f(\Psi)e^{-\frac{1}{2}\langle \Psi, A^{-1}\Psi\rangle} d\psi_{1}\wedge \cdots \wedge d\psi_{2n}. \label{GA24}
\end{eqnarray}
for every $f \in \bigwedge V$. In (\ref{GA24}), we are expressing fields as vectors
\begin{eqnarray}
\Psi = \begin{pmatrix}
\psi_{1} \\
\vdots \\
\psi_{2n}
\end{pmatrix}
\label{GA25}
\end{eqnarray}
and we have introduced an inner product notation
\begin{eqnarray}
\langle \Psi, \Theta \rangle := \Psi^{T}\Theta = \psi_{1}\wedge \theta_{1} + \cdots + \psi_{2n}\wedge \theta_{2n} \label{GA26}
\end{eqnarray}
which is very convenient when dealing with quadratic forms. In particular, a simple completion of squares argument leads to the following formula
\begin{eqnarray}
\E_{\mu_{A}}[e^{\langle \Psi,\Theta\rangle}] = e^{-\frac{1}{2}\langle \Theta, A\Theta\rangle}. \label{GA27}
\end{eqnarray}
Again, the right hand side truncates to a polynomial on the finite-dimensional Grassmann algebra generated by the $\Theta$'s.

\begin{prop} \label{propertiesgaussianmeasure}

\

\textup{(a)} If the weighted Laplacian operator $\Delta_{A}$ is defined by
\begin{equation}
\Delta_{A} := \bigg{\langle}\frac{\partial}{\partial \Psi}, A\frac{\partial}{\partial \Psi}\bigg{\rangle} \equiv \sum_{i,j=1}^{2n}A_{ij}\frac{\partial}{\partial \psi_{i}}\frac{\partial}{\partial \psi_{j}}  \label{GA28}
\end{equation}
then
\begin{equation}
(\mu_{A}*f)(\Psi) := \int f(\Psi+\Theta)d\mu_{A}(\Theta) = e^{\frac{1}{2}\Delta_{A}}f(\Psi)  \label{GA28.1}
\end{equation}
holds.\footnote{ The exponential of $\Delta_{A}$ truncates to a polynomial by nilpotency of the derivatives, so the right hand side of (\ref{GA28.1}) is well-defined and analytic in $A$.} We refer to (\ref{GA28.1}) as the Heat Kernel Formula. 

\textup{(b)}  If $A, B \in M_{2n}(\mathbb{C})$ are both skew-symmetric and invertible, then
\begin{equation}
\E_{\mu_{A+B}}[f] = \E_{\mu_{B}}[\mu_{A}*f] 
\quad
\mbox{and}
\quad
\mu_{A+B}*f = \mu_B * (\mu_A *f) \; .
\label{GA28.2}
\end{equation}
Relation (\ref{GA28.2}) is also known as the covariance splitting formula. 

\end{prop}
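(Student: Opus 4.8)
The plan is to prove part (a) first and then obtain part (b) as a formal consequence at the level of operators. For (a), both sides of (\ref{GA28.1}) are linear in $f$, so by (\ref{GA4}) it suffices to verify the identity on monomials $\psi_{i_1}\wedge\cdots\wedge\psi_{i_p}$. The key observation is that the shift $\Psi\mapsto\Psi+\Theta$ is implemented by the translation operator
\[
f(\Psi+\Theta) = e^{\langle\Theta,\,\partial/\partial\Psi\rangle}f(\Psi) = \sum_{k=0}^{p}\frac{1}{k!}\Big\langle\Theta,\tfrac{\partial}{\partial\Psi}\Big\rangle^{k}f(\Psi),
\]
a finite sum because each monomial has degree at most $p$. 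Since the operators $\partial/\partial\psi_i$ do not involve the integration variables $\Theta$, I would interchange the Grassmann Gaussian integral with this finite sum and normal-order each term, moving all factors $\theta_i$ to the left of all derivatives, so that $\int f(\Psi+\Theta)\,d\mu_A(\Theta)$ becomes a sum of the purely numerical integrals $\int \theta_{i_1}\cdots\theta_{i_k}\,d\mu_A(\Theta)$ acting, through the remaining derivatives, on $f$.

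The heart of the matter is then the combinatorial identity supplied by the correlation rule (\ref{GA23}). Odd powers of $\langle\Theta,\partial/\partial\Psi\rangle$ integrate to zero, while for an even power $2m$ one has $\int \theta_{i_1}\cdots\theta_{i_{2m}}\,d\mu_A(\Theta) = \operatorname{Pf}[A_{i_a i_b}]$, which by definition expands as a signed sum over the $(2m-1)!!=\tfrac{(2m)!}{2^{m}m!}$ perfect matchings of $\{1,\dots,2m\}$. Each matching contracts a pair of derivatives against one entry of $A$, i.e.\ produces one factor of the weighted Laplacian (\ref{GA28}), so after summing over the repeated indices every matching reproduces the same operator $\Delta_A^{m}$; together with the prefactor $1/(2m)!$ from the exponential series this yields exactly $\tfrac{1}{m!}\big(\tfrac12\Delta_A\big)^{m}$, and summing over $m$ reconstructs $e^{\frac12\Delta_A}f$, which is (\ref{GA28.1}). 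I expect the sign bookkeeping to be the main obstacle: one must track the fermionic signs produced both by normal-ordering the anticommuting factors $\theta_i\,\partial/\partial\psi_i$ and by the Pfaffian expansion, and verify that they combine to give $+\tfrac12\Delta_A$ rather than its negative, consistently with the derivative convention (\ref{GA11}) and the Pfaffian (\ref{GA21}). Equivalently, the whole computation can be phrased as the substitution of the odd ``source'' $\partial/\partial\Psi$ into the generating identity (\ref{GA27}), but the same sign check is exactly what justifies that substitution.

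For part (b) I would avoid any further integration and argue purely with the operators, using (a). By (\ref{GA28.1}), $\mu_A*f = e^{\frac12\Delta_A}f$ as an element of $\bigwedge V$; and setting $\Psi=0$ (i.e.\ extracting the scalar part) in the same formula gives $\E_{\mu_B}[g] = (\mu_B*g)\big|_{\Psi=0} = \big(e^{\frac12\Delta_B}g\big)\big|_{\Psi=0}$ for every $g$, since $(\mu_B*g)(0)=\int g(\Theta)\,d\mu_B(\Theta)$. Applying this with $g=\mu_A*f$ gives $\E_{\mu_B}[\mu_A*f] = \big(e^{\frac12\Delta_B}e^{\frac12\Delta_A}f\big)\big|_{\Psi=0}$, while the left-hand side of (\ref{GA28.2}) equals $\E_{\mu_{A+B}}[f]=\big(e^{\frac12\Delta_{A+B}}f\big)\big|_{\Psi=0}$. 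It then remains to note two elementary facts about the operator (\ref{GA28}): it is linear in the covariance, so $\Delta_{A+B}=\Delta_A+\Delta_B$; and $\Delta_A$ and $\Delta_B$ commute, because each is a sum of products of an even number of the anticommuting derivatives and is therefore an even operator. Consequently $e^{\frac12\Delta_{A+B}}=e^{\frac12\Delta_A}e^{\frac12\Delta_B}=e^{\frac12\Delta_B}e^{\frac12\Delta_A}$, and the two expressions coincide. Note that this argument uses only the skew-symmetry of $A$ and $B$, inherited through (a); the invertibility assumed in the statement is not needed for the operator manipulation, and serves only to make the measures $\mu_A,\mu_B,\mu_{A+B}$ representable as in (\ref{GA24}). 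Thus the entire difficulty of the proposition is concentrated in the sign/combinatorial verification of (a), after which (b) is formal.
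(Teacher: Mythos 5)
Your proposal is correct, but it takes a genuinely different and more self-contained route than the paper. The paper disposes of part (a) by citing \cite{FKTbook,MSbook} and of part (b) by asserting that the identity can be checked on monomials and extended by linearity; note that the monomial check for (b) actually amounts to the Pfaffian convolution identity $\operatorname{Pf}[(A+B)_{J\times J}]=\sum_{K\subset J}\pm\operatorname{Pf}[A_{K\times K}]\operatorname{Pf}[B_{(J\setminus K)\times(J\setminus K)}]$, which is not entirely trivial. You instead prove (a) directly via the translation operator $e^{\langle\Theta,\partial/\partial\Psi\rangle}$ and the matching expansion of the Pfaffian, and then obtain (b) as a purely operator-theoretic consequence: $\E_{\mu_B}[g]=(e^{\frac12\Delta_B}g)|_{\Psi=0}$, $\Delta_{A+B}=\Delta_A+\Delta_B$, and $[\Delta_A,\Delta_B]=0$ because both are even in the anticommuting derivatives. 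This buys you a cleaner part (b) (the Pfaffian identity above becomes a corollary rather than an input) and correctly isolates that invertibility is irrelevant. The only loose end is that you flag but do not carry out the sign verification in (a) --- the $(-1)^{m}$ from normal-ordering the $2m$ factors $\theta_{i}\,\partial/\partial\psi_{i}$ must cancel against the signs arising from reordering the derivatives according to each matching and from the antisymmetry of $A$, so that the coefficient of $\Delta_A^m$ is $+\tfrac{1}{2^m m!}$. This is exactly the computation done in the references the paper cites (e.g.\ \cite{FKTbook}), and it does close as you expect, so your argument is complete once that standard check is written out.
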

\begin{proof}
Part \textup{(a)} is proved in \cite{FKTbook,MSbook}. Part \textup{(b)} follows from (\ref{GA28.1}), the relations $\Delta_{A+B} = \Delta_A + \Delta_B$ and $\Delta_A\Delta_B=\Delta_B\Delta_A$.
\end{proof}

In many applications, the covariance of a fermionic theory can be expressed as a block matrix
\begin{eqnarray}
A = \begin{pmatrix}
0 & C \\
-C & 0 
\end{pmatrix}
\label{GA29.1}
\end{eqnarray}
for some symmetric matrix $C = (C_{ij}) \in M_{n}(\mathbb{C})$, in which case $\operatorname{Pf}(A) = \operatorname{det}C$. This motivates the split of our fields $\{\psi_{1},...,\psi_{2n}\}$ into two sets $\{\psi_{1},...,\psi_{n}\}$ and $\{\bar{\psi}_{1},...,\bar{\psi}_{n}\}$, with $\bar{\psi}_{i} := \psi_{i+n}$. Such transformations yield
\begin{eqnarray}
\frac{1}{2}\langle \Psi, A\Psi \rangle = \langle \bar{\Psi},C\Psi\rangle. \label{GA30}
\end{eqnarray}
When $C$ is invertible, we have
\begin{eqnarray}
A^{-1} = \begin{pmatrix}
0 & -C^{-1} \\
C^{-1} & 0 
\end{pmatrix}
.\label{GA31}
\end{eqnarray}
Using (\ref{GA24}) and (\ref{GA30}), it is natural to define
\begin{eqnarray}
\int f(\bar{\Psi},\Psi)d\mu_{C}(\bar{\Psi},\Psi) := \det C\int f(\bar{\Psi},\Psi)e^{-\langle \bar{\Psi},C^{-1}\Psi\rangle}d\bar{\psi}_{1}\wedge \cdots \wedge d\bar{\psi}_{n}\wedge d\psi_{1}\wedge \cdots \wedge d\psi_{n}, \label{GA32}
\end{eqnarray}
which is referred to as the Grassmann Gaussian integral with covariance $C$. 

\begin{prop}\label{gaussiancorrelationsbarredunbarred} Let $J = \{j_{1},...,j_{p}\}$ and $K = \{k_{1},...,k_{q}\}$ be nonempty subsets of the set $\{1,...,n\}$. Then, with the above notations
\begin{eqnarray}
\int \psi_{j_{1}}\wedge \cdots \wedge \psi_{j_{p}}\wedge \bar{\psi}_{k_{1}}\wedge \cdots \wedge \bar{\psi}_{k_{q}}d\mu_{C}(\bar{\Psi},\Psi) = \begin{cases}
\displaystyle 0 \quad \mbox{if $p \neq q$} \\
\displaystyle \operatorname{det}C_{J\times K} \quad \mbox{if $p=q$} 
\end{cases}
,\label{GA33}
\end{eqnarray}
with $C_{J\times K} :=(C_{ij})_{i\in J, j \in K}$.
\end{prop}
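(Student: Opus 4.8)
The plan is to reduce everything to the evaluation of a single Pfaffian by means of the correlation formula (\ref{GA23}). Under the identification $\bar\psi_{k}=\psi_{k+n}$, the integrand $\psi_{j_1}\wedge\cdots\wedge\psi_{j_p}\wedge\bar\psi_{k_1}\wedge\cdots\wedge\bar\psi_{k_q}$ is a Grassmann monomial in $\psi_1,\dots,\psi_{2n}$ whose ordered index list is $(j_1,\dots,j_p,k_1+n,\dots,k_q+n)$. Since $\mu_C$ is by construction the Gaussian measure $\mu_A$ attached to the block covariance (\ref{GA29}), formula (\ref{GA23}) shows that the integral in question equals $\operatorname{Pf}(\tilde A)$, where $\tilde A$ is the $(p+q)\times(p+q)$ submatrix of $A$ indexed by this list. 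First I would record the block structure of $\tilde A$: because both diagonal blocks of $A$ in (\ref{GA29}) vanish and $C$ is symmetric, one gets
\[
\tilde A=\begin{pmatrix} 0 & C_{J\times K}\\ -(C_{J\times K})^{T} & 0\end{pmatrix},
\]
so the entire statement is reduced to computing the Pfaffian of this block-antidiagonal matrix.

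Next I would expand $\operatorname{Pf}(\tilde A)$, via (\ref{GA21}), as the signed sum over the perfect pairings of $\{1,\dots,p+q\}$. The decisive structural remark is that a matched pair carries a nonzero factor only if it joins one of the first $p$ indices (a $\psi$-index) to one of the last $q$ indices (a $\bar\psi$-index), since the two diagonal blocks of $\tilde A$ are zero. When $p\neq q$ the two groups have unequal sizes, so no pairing can be made up exclusively of such cross pairs; every pairing then contains at least one intra-group pair, whose factor vanishes, and hence every term of the expansion is zero. This yields $\operatorname{Pf}(\tilde A)=0$ and disposes of the case $p\neq q$ (it also absorbs the subcase $p+q$ odd, where the Pfaffian is zero by convention).

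For $p=q$ the surviving pairings are exactly the cross perfect matchings, which are in bijection with permutations $\pi\in S_p$ through $a\mapsto p+\pi(a)$; each contributes $\prod_{a=1}^{p}(C_{J\times K})_{a,\pi(a)}$ times the signature of the pairing, i.e.\ the sign of the permutation $(1,\,p+\pi(1),\,2,\,p+\pi(2),\dots,p,\,p+\pi(p))$. Writing this signature as $\operatorname{sgn}(\rho_0)\operatorname{sgn}(\pi)$, where $\rho_0$ is the fixed interleaving permutation obtained at $\pi=\mathrm{id}$, the sum collapses to $\operatorname{sgn}(\rho_0)\sum_{\pi\in S_p}\operatorname{sgn}(\pi)\prod_{a}(C_{J\times K})_{a,\pi(a)}=\operatorname{sgn}(\rho_0)\det C_{J\times K}$. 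I expect the main obstacle to be exactly this sign bookkeeping: one must verify that $\operatorname{sgn}(\rho_0)$, computed within the Pfaffian and integration conventions of (\ref{GA21})--(\ref{GA23}), is the factor that turns the right-hand side into precisely $\det C_{J\times K}$ (the check at $p=1$, where the interleaving is trivial, already returns $+\det C_{J\times K}$). Once the ordering conventions are fixed, the factorization of the pairing sign and the recognition of the permutation sum as $\det C_{J\times K}$ are routine.
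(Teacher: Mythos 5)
The paper states this proposition without proof, so there is nothing to compare against on that side; I can only assess your argument on its own terms. Your overall strategy --- identify the integral with $\E_{\mu_A}[\cdot]$ for the block covariance (\ref{GA29}), apply (\ref{GA23}) to reduce to $\operatorname{Pf}(\tilde A)$ with $\tilde A=\bigl(\begin{smallmatrix}0 & C_{J\times K}\\ -(C_{J\times K})^{T} & 0\end{smallmatrix}\bigr)$, and expand in perfect pairings --- is the standard and correct route, and your treatment of the case $p\neq q$ (every pairing must contain an intra-block pair, hence vanishes) is complete and correct.

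The gap is exactly at the point you flag and then wave away: the interleaving sign. For the riffle permutation $\rho_0=(1,\,p{+}1,\,2,\,p{+}2,\dots,p,\,2p)$ one counts $\sum_{a=1}^{p}(p-a)=p(p-1)/2$ inversions, so $\operatorname{sgn}(\rho_0)=(-1)^{p(p-1)/2}$, which is $-1$ already for $p=2$: a direct Wick expansion of $\E_{\mu_A}[\psi_{j_1}\wedge\psi_{j_2}\wedge\bar\psi_{k_1}\wedge\bar\psi_{k_2}]$ gives $-C_{j_1k_1}C_{j_2k_2}+C_{j_1k_2}C_{j_2k_1}=-\det C_{J\times K}$. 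Your check at $p=1$ is the one case where the interleaving is trivial, so it cannot detect this. As written, your argument therefore proves $\operatorname{Pf}(\tilde A)=(-1)^{p(p-1)/2}\det C_{J\times K}$, which agrees with (\ref{GA33}) only for $p\equiv 0,1 \pmod 4$. To close the gap you must either (i) first reorder the monomial into a paired form (e.g.\ $\psi_{j_1}\wedge\bar\psi_{k_1}\wedge\cdots\wedge\psi_{j_p}\wedge\bar\psi_{k_p}$, or with the $\bar\psi$-block reversed), which contributes a compensating $(-1)^{p(p-1)/2}$ before (\ref{GA23}) is applied, or (ii) work directly from the defining formula (\ref{GA32}) and track the signs coming from its specific ordering of the differentials $d\bar\psi_1\wedge\cdots\wedge d\bar\psi_n\wedge d\psi_1\wedge\cdots\wedge d\psi_n$, rather than silently identifying $\int\cdot\,d\mu_C$ with $\E_{\mu_A}[\cdot]$ --- that identification itself holds only up to ordering-dependent signs. (For the paper's later use of this proposition only $|\det C_{J\times K}|$ matters, cf.\ (\ref{NOR5}), but as a proof of the stated identity the sign must be resolved, not postponed.)
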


\section{Norms and Bounds for the Correlation Functions}\label{norms}

Given a skew-symmetric invertible matrix $A = (A_{ij}) \in M_{2n}(\mathbb{C})$, we define its norm by
\begin{eqnarray}
\|A\| \equiv \|A\|_{1,\infty} := \sup_{i}\sum_{j=1}^{2n}|A_{ij}|. \label{NOR1}
\end{eqnarray}
Of course, the introduction of this norm allows us to obtain bounds
for correlation functions, as follows \footnote{ The eigenvalues of $A$, or any
  of its principal minor $M$ of order $p$, are 
purely imaginary and occurs in pairs: $\pm i\eta _{j}$ with $\eta _{j}>0$, $
j=1,\ldots ,n$. So, $\text{Pf}
A=\sqrt{\det A}=\prod_{j=1}^{n}\eta _{j}\leq \left\Vert A\right\Vert ^{n}$
since the spectral radius $\mu _{A}=\max_{j}\eta _{j}$ of $A$ satisfies $\mu
_{A}\leq \left\Vert A\right\Vert $ for any matrix norm, including the
induced norm
adopted here. See e.g. Thm. 1 in Sec. 10.3 and resolved Exercise 9 in
Sec. 10.4 of \cite{LT}. } 
\begin{eqnarray}
|\operatorname{Pf}[A_{i_{k},i_{l}}]_{1\le k,l \le p}| = |\E_{\mu_{A}}[\psi_{i_{1}}\wedge \cdots \wedge \psi_{i_{p}}]|  \le \|A\|^{p/2}. \label{NOR2}
\end{eqnarray} 
Unfortunately, these estimates are not suitable for typical applications. This merely reflects that our hypotheses about the matrix $A$ are still too weak, a result of the fact that important features of a typical fermionic covariance are not being taken into account. To strengthen our hypotheses on the covariance, we revisit the discussion in the work of Brydges and Wright \cite{BryWri} and consider $A$ to be the block matrix (\ref{GA29.1}) and $C$ to be the difference $C = C^{+} - C^{-}$ between two positive-definite matrices $C^{\pm} \in M_{n}(\mathbb{C})$ such that $\det C \neq 0$. The following result is a restatement of Lemma 2.1 and Proposition 2.2 from their work.

\begin{prop}\label{estimatescorrelations} Let $J = \{i_{1},...,i_{p}\}$ be a nonempty subset of the set $\{1,...,2n\}$ and set
\begin{eqnarray}
\Lambda_{J} := \{i \in \{1,...,n\}: \hspace{0.1cm} \mbox{$i \in J$ or $i + n \in J$}\}. \label{NOR3}
\end{eqnarray}
If we set
\begin{eqnarray}
\Psi_{J} := \psi_{i_{1}}\wedge \cdots \wedge \psi_{i_{p}}, \label{NOR4}
\end{eqnarray}
then
\begin{eqnarray}
|\E_{\mu_{A}}[\Psi_{J}]| \le (4\max_{\pm, i \in \Lambda_{J}}C^{\pm}_{ii})^{\frac{|J|}{2}} \label{NOR5}
\end{eqnarray}
where $|J|$ denotes the cardinality of $J$. 
\end{prop}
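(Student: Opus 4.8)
The plan is to turn the Pfaffian correlation on the left-hand side of (\ref{NOR5}) into an ordinary determinant, and then to control that determinant by the Gram--Hadamard inequality, using the positive definiteness of $C^{\pm}$ to build the required Gram vectors.

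First I would dispose of the trivial cases. By (\ref{GA23}) the correlation $\E_{\mu_A}[\Psi_J]$ vanishes unless $p=|J|$ is even, so I assume $|J|=2m$. Using the splitting $\bar\psi_i=\psi_{i+n}$, write $J^{(1)}=J\cap\{1,\dots,n\}$ for the unbarred indices and $J^{(2)}=\{i:\,i+n\in J\}$ for the barred ones. Since every unbarred index is smaller than every barred index, the increasing product $\Psi_J$ already has all $\psi$'s to the left of all $\bar\psi$'s, so no reordering sign is incurred and Proposition \ref{gaussiancorrelationsbarredunbarred} applies directly: $\E_{\mu_A}[\Psi_J]=\det C_{J^{(1)}\times J^{(2)}}$ when $|J^{(1)}|=|J^{(2)}|=m$, and $\E_{\mu_A}[\Psi_J]=0$ otherwise (in particular when only one of the two blocks is populated). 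In the vanishing cases (\ref{NOR5}) is immediate, so it remains to estimate $|\det C_{J^{(1)}\times J^{(2)}}|$.

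The key step is to represent $C=C^{+}-C^{-}$ as a Gram matrix. Writing $(\cdot,\cdot)$ for the standard Hermitian inner product and $a_i^{\pm}=(C^{\pm})^{1/2}e_i\in\mathbb{C}^n$, positive definiteness gives $C^{\pm}_{ij}=(a_i^{\pm},a_j^{\pm})$ with $\|a_i^{\pm}\|^2=C^{\pm}_{ii}$. Embedding into $\mathbb{C}^n\oplus\mathbb{C}^n$ and setting
\[
f_i=(a_i^{+},a_i^{-}),\qquad g_j=(a_j^{+},-a_j^{-}),
\]
one checks $(f_i,g_j)=C^{+}_{ij}-C^{-}_{ij}=C_{ij}$, so $C_{J^{(1)}\times J^{(2)}}=\big((f_i,g_j)\big)_{i\in J^{(1)},\,j\in J^{(2)}}$ is a genuine Gram matrix. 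Moreover $\|f_i\|\le\|a_i^{+}\|+\|a_i^{-}\|=\sqrt{C^{+}_{ii}}+\sqrt{C^{-}_{ii}}\le 2\sqrt{\max_{\pm}C^{\pm}_{ii}}$, and the same bound holds for $\|g_j\|$.

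Finally I would apply the Gram--Hadamard inequality $|\det((f_i,g_j))|\le\prod_{i\in J^{(1)}}\|f_i\|\prod_{j\in J^{(2)}}\|g_j\|$. Since $J^{(1)},J^{(2)}\subseteq\Lambda_J$ by the definition (\ref{NOR3}), every factor is at most $2\sqrt{\max_{\pm,\,i\in\Lambda_J}C^{\pm}_{ii}}$, and there are $|J^{(1)}|+|J^{(2)}|=|J|$ of them, whence
\[
|\E_{\mu_A}[\Psi_J]|\le\Big(2\sqrt{\max_{\pm,\,i\in\Lambda_J}C^{\pm}_{ii}}\Big)^{|J|}=\Big(4\max_{\pm,\,i\in\Lambda_J}C^{\pm}_{ii}\Big)^{|J|/2},
\]
which is (\ref{NOR5}). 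The main obstacle is the representation step: a difference of positive matrices is not positive, so $C$ admits no direct Gram representation; the device of doubling the space with an opposite sign on the $C^{-}$ component restores a true inner product, at the cost of combining $C^{+}_{ii}$ and $C^{-}_{ii}$ into the single diagonal weight that produces the factor $4$. One must also keep track of the ordering of the generators so that no spurious sign spoils the identification with $\det C_{J^{(1)}\times J^{(2)}}$, and verify that both blocks $J^{(1)},J^{(2)}$ lie inside $\Lambda_J$, which is exactly what (\ref{NOR3}) guarantees.
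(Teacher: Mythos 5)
Your proof is correct and takes essentially the same route as the argument this proposition rests on (the paper does not reprove it but cites Lemma 2.1 and Proposition 2.2 of Brydges--Wright): reduce the Pfaffian to $\det C_{J^{(1)}\times J^{(2)}}$ via Proposition \ref{gaussiancorrelationsbarredunbarred}, realize $C=C^{+}-C^{-}$ as a Gram matrix by doubling the space with a sign flip on the $C^{-}$ component, and apply the Gram--Hadamard inequality. As a minor aside, using $\|f_{i}\|^{2}=C^{+}_{ii}+C^{-}_{ii}\le 2\max_{\pm}C^{\pm}_{ii}$ instead of the triangle inequality would even yield the constant $2$ in place of $4$, so your bound is certainly sufficient.
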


Proposition \ref{estimatescorrelations} provides suitable bounds for correlation functions which are going to play a key role in the remaining of our work. Please note, however, that the assumption that $C$ can be decomposed as a difference $C = C^{+}-C^{-}$ of positive covariances $C^{\pm}$ is not essential to develop the method presented in this paper. This assumption was introduced in the original paper \cite{BryWri} and it simplifies the analysis because it immediately leads to Gram type bounds for $\operatorname{Pf}A_{J \times J} = \det C_{J \times J}$, as stated in Proposition \ref{estimatescorrelations}. The applications we are interested here do satisfy this hypothesis, so we conveniently adopted it as well. The essential point in all applications is that the expectation of a monomial of degree $j=|J|$ is bounded by $K^j$, where the constant $K$ is uniform in the number of Grassmann generators $2n$. This is the case when the covariance has a finite Gram constant or, more generally, a finite determinant constant \cite{PeSa}. See also Remark \ref{boundremark}.

Once we have set a way of estimating the contribution of a covariance matrix, the next step is to introduce norms on the underlying Grassmann algebra. For this matter, suppose we are given an element $f$ of $\bigwedge V$ whose representation in terms of its generators is 
\begin{eqnarray}
f(\Psi) = \sum_{p=0}^{\infty}\sum_{i_{1}< \cdots < i_{p}}\zeta_{i_{1},...,i_{p}}(\psi_{i_{1}}\wedge \cdots \wedge \psi_{i_{p}}). \nonumber
\end{eqnarray}
Motivated by \cite{BryWri}, we write this expression more compactly as
\begin{eqnarray}
f(\Psi) = \sum_{J\subset \{1,...,2n\}}\zeta_{J}\Psi_{J}, \label{NOR6}
\end{eqnarray}
where the sum ranges over every ordered subset $J= \{i_{1}<\cdots < i_{p}\}$ of $\{1,...,2n\}$ and, to each such subset, $\zeta_{J} := \zeta_{i_{1},...,i_{p}}$ and $\Psi_{J}$ is given as in (\ref{NOR4}). Under these notations, we also set
\begin{eqnarray}
\frac{\partial}{\partial \Psi_{J}} := \frac{\partial^{|J|}}{\partial \psi_{i_{p}}\cdots \partial \psi_{i_{1}}}, \label{NOR7}
\end{eqnarray}
so that
\begin{eqnarray}
\frac{\partial f(\Psi)}{\partial \Psi_{J}}\bigg{|}_{\Psi = 0} \equiv \zeta_{J}. \label{NOR8}
\end{eqnarray}

\begin{defi}\label{normsgrassmann} Let $z \in \mathbb{R}\setminus\{0\}$ be fixed. The norm of $f \in \bigwedge V$ is defined to be
\begin{eqnarray}
\|f\|_{z} := \sum_{m=1}^{\infty}f_{m}z^{2m}, \label{NOR9}
\end{eqnarray}
where, for each $m$, $f_{m}$ is given by
\begin{eqnarray} \label{NOR10}
f_{m} := \sup_{i\in \{1,...,2n\}}\frac{1}{2m}\sum_{J \ni i, |J| = 2m}\bigg{|}\frac{\partial f(\Psi)}{\partial \Psi_{J}}\bigg{|}_{\Psi=0}\bigg{|} \equiv \sup_{i\in \{1,...,2n\}}\frac{1}{2m}\sum_{J \ni i, |J| = 2m}|\zeta_{J}|.
\end{eqnarray}
The real number $z$ is called a norm parameter.
\end{defi}

It is important to note that, although we are strictly following the developments of the theory presented in \cite{BryWri,SW}, Definition \ref{normsgrassmann} is the first point at which our work diverges from theirs, as the parameter $z$ is introduced in the norm only with even powers. Of course, this can be done because all physical objects addressed in this work belong to the subspace of even elements with zero constant field. In truth, this is the reason why we even call $\|\cdot\|_{z}$ a norm, given that it is actually a seminorm in the full Grassmann algebra. 

\section{The Renormalization Group Transformation and Polchinski's Equation}\label{rgtransformation}

Suppose $V$ and $U$ are complex vector spaces with basis given, respectively, by $\mathcal{V} = \{\psi_{1},...,\psi_{2n}\}$ and $\mathcal{U} = \{\theta_{1},...,\theta_{2n}\}$ and let $A = (A_{ij}) \in M_{2n}(\mathbb{C})$ be skew-symmetric and invertible. In addition, let $f$ be an even $\psi$-homogeneous element of $\bigwedge(V\oplus U)$, meaning that it is even when seen as an element of $\bigwedge V$. 
The \textit{effective action} is defined as 
\beq\label{RGT1}
\tilde{F}(\Psi) := -\log [(\mu_{A}*e^{-f})(\Psi)] \; .
\eeq
Because our Grassmann algebra is finitely generated and $A$ is invertible, $F$ is well-defined for any such $f$: the expansion of $e^{-f}$ reduces to a finite sum due to the nilpotency of the Grassmann variables, and $\mu_A * f^p$ is well-defined for any $p\in \bN_0$. Moreover, a possible constant term $f_\emptyset$ in the Grassmann polynomial $f$ can be taken out of the integral, and $\mu_A * 1 = 1$ by normalization of the Grassmann Gaussian measure. Thus the logarithm is again given by an expansion that truncates to a polynomial, hence well-defined. As a consequence, $F$ is an analytic function of $A$ and $f$. 

The basic strategy of the renormalization group (RG) is to study the \textit{effective action}, not by performing this integral at once but, instead, by exploring the covariance splitting formula (\ref{GA28.2}) to perform a step-by-step integration. At each step, fluctuations are integrated out as a certain range of energy is considered. As discussed in \cite{BryKen,BryWri}, the problem is then reduced to the study of the RG transformation
\begin{eqnarray}
T_{A}: f \mapsto (T_{A}f)(\Psi) :=  -\log [(\mu_{A}*e^{-f})(\Psi)] \label{RGT2}
\end{eqnarray}
which, as in the bosonic case, satisfies a semi-group property
\begin{eqnarray}
(T_{A_{1}+A_{2}}f)(\Psi) = [T_{A_{1}}(T_{A_{2}}f)](\Psi). \label{RGT3}
\end{eqnarray}
Hence, the RG transformation induces a dynamics for the evolution of $f$, called the \textit{bare action} from now on, as the system changes with different energy scales. Following the notations and ideas in \cite{BryWri}, we would like to implement a continuous scale decomposition of the covariance matrix $A$, in which case the evolution law of the RG transformation is governed by a \textit{flow equation}, that is, a partial differential equation (PDE) referred to as the \textit{Polchinski equation}. Thus, the basic hypothesis of our analysis is that $A$ is a $C^1$ function of a parameter $t$, hence can be decomposed as an integral
\begin{eqnarray}
A = \int_{t_{0}}^{T}\dot{A}(\tau)d\tau \label{RGT4}
\end{eqnarray}
between a lower scale $t_{0}$ and an upper scale $T > t_{0}$. Here, $\dot{A}(\tau)$ denotes the derivative of $A$ with respect to $\tau$. Because the effective action is analytic in $A$, it is also $C^1$ in $t$. Thus existence is proven, and we shall in the following be concerned with bounds that are stable under taking $n \to \infty$, and useful for field theoretic applications. 

Without loss of generality, fix $t_{0} = 0$, $t \in [0,T]$ and let us write (\ref{RGT4}) as a sum
\begin{eqnarray}
A = A(t) + A_{[t,T]}, \label{RGT5}
\end{eqnarray}
with
\begin{eqnarray}
A(t) := \int_{0}^{t}\dot{A}(\tau)d\tau \quad \mbox{and} \quad A_{[t,T]} := \int_{t}^{T}\dot{A}(\tau)d\tau. \label{RGT6}
\end{eqnarray}
If $A(t)$ is assumed to be a block matrix of the form (\ref{GA29.1}) satisfying the properties in section \ref{norms}, we may define
\begin{eqnarray}
\sigma_{(s,t)}^{2} := \int_{s}^{t}(4\max_{\pm, i \in \Lambda_{J}}\dot{C}_{ii}^{\pm}(\tau))d\tau, \label{RGT7}
\end{eqnarray}
for $0 \le s \le t$. Note that our definition of $\sigma_{(s,t)}^{2}$ differs from the corresponding object in \cite{BryWri}, as our $\sigma_{(0,t)}^{2}$ is homogeneous of degree one, in the sense that a change $t \mapsto t C(\tau)$ leads to a change $\sigma_{(s,t)}^{2} \mapsto t \sigma_{(s,t)}^{2}$. This is in agreement with the discussion in \cite{SW} why the square of the Gram constant must satisfy the aforementioned homogeneity condition. Additionally, from Proposition \ref{estimatescorrelations} we get
\begin{eqnarray}
|\E_{\mu_{A_{[s,t]}}}[\Psi_{J}]|  \le \bigg{|}\bigg{(}4\max_{\pm, i\in \Lambda_{J}}\int_{s}^{t}\dot{C}^{\pm}_{ii}(\tau)d\tau \bigg{)}\bigg{|}^{\frac{|J|}{2}} \le \bigg{|}\int_{s}^{t}(4\max_{\pm,i\in \Lambda_{J}}\dot{C}^{\pm}_{ii}(\tau)d\tau)\bigg{|}^{\frac{|J|}{2}} = (\sigma_{(s,t)}^{2})^{\frac{|J|}{2}}. \label{RGT8}
\end{eqnarray}

\begin{rem}\label{boundremark} As already discussed above, our method works whenever a uniform Gram bound $\gamma_{\dot{C}(\tau)}$ for $\dot{C}(\tau)$ implies a bound $\delta_{C_{[s,t]}}^{2}$ for $C_{[s,t]}$. This condition is enough to guarantee estimate (\ref{RGT8}), even when the covariance $C_{[s,t]}$ cannot be represented as a difference $C^{+} - C^{-}$ of positive matrices,
and the right hand side of  (\ref{RGT8}) is then simply replaced by $(\delta_{C_{[s,t]}}^{2})^{\frac{|J|}{2}}$. 
\end{rem}

By utilizing the semi-group property of the RG transformation on the decomposition (\ref{RGT5}), we can redirect our attention to the scale-dependent functions
\begin{eqnarray}
(T_{A(t)}f)(\Psi) := \tilde{F}(t,\Psi) \quad \mbox{and} \quad \tilde{\varphi}(t,\Psi) := e^{-\tilde{F}(t,\Psi)} =  (\mu_{A(t)}*e^{-f})(\Psi). \label{RGT9}
\end{eqnarray}

We wish to highlight an important aspect of the RG transformation: it preserves parity. This property holds great significance in the analysis that follows, and thus, we present it as a formal proposition for clarity. Its proof can be found in \cite{FKTbook}, Lemma I.23 and, for this reason, it is omitted here. 

\begin{prop}\label{parity} If $f \in \bigwedge (V\oplus U)$ is even and $\psi$-homogeneous, in the sense of Definition \ref{homogeneouselement}, then $F(t,\Psi)$ is an even element of $\bigwedge V$ for every $t \in [0,T]$.
\end{prop}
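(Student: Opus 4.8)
The plan is to track the $\mathbb{Z}_{2}$-grading through every operation composing the RG transformation $T_{A(t)}$. Write $\bigwedge V=(\bigwedge V)^{+}\oplus(\bigwedge V)^{-}$ for the splitting into even and odd elements. The even part $(\bigwedge V)^{+}$ is a commutative unital subalgebra: it is closed under $\wedge$ since the product of two even elements is even, and its elements are central because $a\wedge b=(-1)^{|a|\,|b|}b\wedge a$. Since $V$ is finite dimensional, $\bigwedge V$ is finite dimensional and every series in an element without scalar part terminates, so there are no convergence issues. These facts already imply that any function defined through the Taylor series (\ref{GA8}) preserves evenness: if $g=g_{0}+g_{1}$ is even, then $g_{0}$ is its scalar part and $g_{1}\in(\bigwedge V)^{+}$, whence each $g_{1}^{k}$ and thus $F(g)$ is even whenever $F(g)$ is defined. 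In particular $e^{-f}$ is even (with $F=\exp$ entire, the exponential is always defined), and $-\log h$ is even for any even $h$ with positive scalar part.

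Granting this, the proposition reduces to showing that the Gaussian convolution $\mu_{A(t)}*$ preserves parity, after which the result follows by composing the three parity-preserving maps $f\mapsto e^{-f}$, $g\mapsto\mu_{A(t)}*g$, and $h\mapsto-\log h$. I would establish parity preservation directly from the Heat Kernel Formula (\ref{GA28.1}). The weighted Laplacian $\Delta_{A(t)}=\sum_{i,j}A(t)_{ij}\,\frac{\partial}{\partial\psi_{i}}\frac{\partial}{\partial\psi_{j}}$ lowers the polynomial degree by exactly two, so it maps $(\bigwedge V)^{\pm}$ into itself; hence so does $e^{\frac{1}{2}\Delta_{A(t)}}$, and $\mu_{A(t)}*e^{-f}=e^{\frac{1}{2}\Delta_{A(t)}}e^{-f}$ is even whenever $e^{-f}$ is. Equivalently, using the integral form $\int e^{-f(\Psi+\Theta)}\,d\mu_{A}(\Theta)$: the shift $\psi_{i}\mapsto\psi_{i}+\theta_{i}$ preserves the total degree of each monomial in $\bigwedge(V\oplus U)$, and integrating out $\Theta$ annihilates the terms of odd $\Theta$-degree because the corresponding correlation (\ref{GA23}) is a Pfaffian of odd size, which vanishes; the surviving $\Psi$-monomials then carry even degree.

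Putting the three steps together, $\tilde{F}(t,\Psi)=-\log[(\mu_{A(t)}*e^{-f})(\Psi)]$ lies in $(\bigwedge V)^{+}$ for every $t\in[0,T]$, which is the assertion. The one point requiring care, rather than a genuine obstacle, is the well-definedness of the logarithm: one needs the scalar part $\E_{\mu_{A(t)}}[e^{-f}]$ of $\mu_{A(t)}*e^{-f}$ to be positive so that $\log$ is defined via (\ref{GA8}), which is precisely the standing assumption that the effective action $T_{A(t)}f$ exists, and the parity statement is insensitive to which admissible branch is used. Everything else is bookkeeping of the even/odd grading, the only structural input being that $\exp$, $\log$ and the heat semigroup $e^{\frac{1}{2}\Delta_{A(t)}}$ all change degree by even amounts.
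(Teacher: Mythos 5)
Your proposal is correct. Note that the paper itself does not prove Proposition \ref{parity}; it defers entirely to Lemma I.23 of \cite{FKTbook}, so there is no in-paper argument to compare against. Your proof supplies the missing details by the standard $\mathbb{Z}_{2}$-grading bookkeeping, and each step checks out: the even subalgebra is closed under products and hence under the Taylor-series functional calculus (\ref{GA8}), since $g_{1}^{k}$ is nilpotent and even; the Laplacian $\Delta_{A(t)}$ shifts degree by exactly two, so the heat semigroup in (\ref{GA28.1}) preserves parity (your alternative argument via vanishing odd-size Pfaffians in (\ref{GA23}) is equally valid and handles the translation $\Psi\mapsto\Psi+\Theta$ in $\bigwedge(V\oplus U)$ cleanly); and the logarithm is defined and parity-preserving once the scalar part $\E_{\mu_{A(t)}}[e^{-f}]$ is positive. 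The only cosmetic point is that you conclude for the unnormalized action $\tilde{F}$ of (\ref{RGT1}) rather than the normalized $F$ of (\ref{RGT14}); since the two differ by a scalar, which has degree zero and is therefore even, the statement for $F$ follows immediately. This is essentially the same argument as in the cited reference, so nothing is genuinely different in route -- you have simply made explicit what the paper leaves to \cite{FKTbook}.
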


The fact that $\tilde{\varphi}$ is given by a convolution allows us to use the Heat Kernel Formula
\begin{eqnarray}
\tilde{\varphi}(t,\Psi) = e^{\frac{1}{2}\Delta_{A(t)}}f(\Psi) \label{RGT10}
\end{eqnarray}
to prove that it satisfies the following \textit{heat equation}
\begin{align}
\label{RG11}
\begin{split}
\begin{cases}
\displaystyle\frac{\partial \tilde{\varphi}}{\partial t} = \frac{1}{2}\Delta_{\dot{A}(t)}\tilde{\varphi} \\
\displaystyle \tilde{\varphi}(0,\Psi) = e^{-f(\Psi)}
\end{cases}
\end{split}
\end{align}
whilst $\tilde{F} = -\log \tilde{\varphi}$ satisfies a nonlinear flow equation
\begin{align}
\label{RGT12}
\begin{split}
\begin{cases}
\displaystyle\frac{\partial \tilde{F}}{\partial t} = \frac{1}{2}\Delta_{\dot{A}(t)}\tilde{F}-\frac{1}{2}\langle \nabla \tilde{F}, \dot{A}(t)\nabla \tilde{F}\rangle  \\
\displaystyle \tilde{F}^{(0)}(\Psi) := \tilde{F}(0,\Psi) = f(\Psi)
\end{cases}
\end{split}
.
\end{align}
The gradient vector $\nabla \tilde{F}$ in (\ref{RGT12}) is defined in the standard way
\begin{eqnarray}
\nabla \tilde{F} := \begin{pmatrix}
\frac{\partial \tilde{F}}{\partial \psi_{1}} \\
\vdots \\
\frac{\partial \tilde{F}}{\partial \psi_{2n}} 
\end{pmatrix}
.\label{RGT13}
\end{eqnarray}

It is more natural to work with normalized functions $F$ and $\varphi$ instead of the unnormalized ones. This is done by setting
\begin{eqnarray}
\varphi(t,\Psi) := \frac{(\mu_{A(t)}*e^{-f})(\Psi)}{\E_{\mu_{A(t)}}[e^{-f}]} =: e^{-f(t,\Psi)}, \label{RGT14}
\end{eqnarray}
in which case we have $\varphi(t, 0) = 1$ and $f(t,0) = 0$ for every $t$. The following result then is an immediate consequence of our previous analysis.

\begin{theo}\label{polchinskiequation} The normalized effective action $f(t,\Psi)$ satisfies the flow equation
\begin{eqnarray}
\frac{\partial f}{\partial t} = \frac{1}{2}\Delta_{\dot{A}(t)}f - \frac{1}{2}\langle \nabla f, \dot{A}(t)\nabla f\rangle -\frac{1}{2}\Delta_{\dot{A}(t)}f\bigg{|}_{\Psi = 0} \label{RGT15}
\end{eqnarray}
with initial condition $f(0,\Psi) = f(\Psi)$.
\end{theo}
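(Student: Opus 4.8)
The plan is to reduce everything to the flow equation (\ref{RGT12}) for the \emph{unnormalized} effective action $\tilde F$, which I take as given from the preceding analysis. The first step is to express the normalized $F$ through $\tilde F$. Evaluating the convolution (\ref{GA28.1}) at $\Psi=0$ gives $(\mu_{A(t)}*e^{-f})(0)=\E_{\mu_{A(t)}}[e^{-f}]$, so the normalization denominator in (\ref{RGT14}) is exactly $\tilde\varphi(t,0)$. Hence $\varphi(t,\Psi)=\tilde\varphi(t,\Psi)/\tilde\varphi(t,0)$, and taking $-\log$ yields the clean relation
\[
F(t,\Psi)=\tilde F(t,\Psi)-\tilde F(t,0).
\]
Everything then follows by differentiating this identity in $t$ and inserting (\ref{RGT12}).

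Next I would differentiate. Since $\tilde F(t,0)$ is independent of $\Psi$, it is annihilated by every $\partial/\partial\psi_k$, hence by $\nabla$ and by $\Delta_{\dot A(t)}$; therefore $\nabla F=\nabla\tilde F$ and $\Delta_{\dot A(t)}F=\Delta_{\dot A(t)}\tilde F$. Differentiating in $t$ gives $\partial_t F=\partial_t\tilde F-\partial_t\tilde F\big|_{\Psi=0}$. For the first term I simply substitute (\ref{RGT12}); for the second I must evaluate the right-hand side of (\ref{RGT12}) at $\Psi=0$.

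The one genuinely nontrivial point — and the step I expect to be the crux — is that the nonlinear term drops out at $\Psi=0$. By Proposition \ref{parity}, $\tilde F(t,\cdot)$ is even, so each component $\partial\tilde F/\partial\psi_i$ of $\nabla\tilde F$ is odd, i.e.\ a sum of monomials of odd degree $\ge 1$. A product of two such factors contains only monomials of even degree $\ge 2$, so its scalar (degree-zero) part vanishes; consequently $\langle\nabla\tilde F,\dot A(t)\nabla\tilde F\rangle\big|_{\Psi=0}=0$. This leaves $\partial_t\tilde F\big|_{\Psi=0}=\tfrac12\Delta_{\dot A(t)}\tilde F\big|_{\Psi=0}=\tfrac12\Delta_{\dot A(t)}F\big|_{\Psi=0}$. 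Assembling the two contributions gives
\[
\frac{\partial F}{\partial t}=\frac{1}{2}\Delta_{\dot A(t)}F-\frac{1}{2}\langle\nabla F,\dot A(t)\nabla F\rangle-\frac{1}{2}\Delta_{\dot A(t)}F\big|_{\Psi=0},
\]
which is exactly (\ref{RGT15}).

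Finally, the initial condition follows from $F(0,\Psi)=\tilde F(0,\Psi)-\tilde F(0,0)=f(\Psi)-f(0)$; under the standing assumption that the bare action carries no constant term (consistent with the even, zero-constant-field setting of Section \ref{norms}), one has $f(0)=0$ and hence $F^{0}=f$. The whole argument is short once (\ref{RGT12}) is available, and its only real content is the parity cancellation of the quadratic term at $\Psi=0$ — which is precisely why preservation of evenness under the RG flow was singled out in Proposition \ref{parity} beforehand.
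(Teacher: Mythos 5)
Your proof is correct and follows exactly the route the paper intends: the paper presents Theorem \ref{polchinskiequation} as an ``immediate consequence'' of (\ref{RGT12}) and (\ref{RGT14}), and your write-up simply makes explicit the identity $F(t,\Psi)=\tilde F(t,\Psi)-\tilde F(t,0)$ together with the parity argument (via Proposition \ref{parity}) showing the quadratic term has vanishing scalar part at $\Psi=0$. Nothing to correct.
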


Equation (\ref{RGT15}) is called \textit{Polchinski's equation} after J. Polchinski \cite{Polch}, who first used it to prove renormalizability\footnote{To see how this is done for $\phi^{4}$ theories, see e.g. \cite{MSbook,KKS} and references therein.}.  In principle, if one solves this equation for $f(t,\Psi)$, one gets all the information about the evolution of the bare action. This is why this approach, and equations derived from it, are sometimes called the \textit{exact renormalization group}. But in most cases, finding a global solution of this equation is too difficult a task and all one can do is to prove local solvability. The idea behind the majorant method, which will be discussed in the next section, is to use a Cauchy-Kowalewski type of argument to control the solution of Polchinski's equation by the solution of an associated Hamilton-Jacobi equation, in close analogy to what is done for bosons in \cite{BryKen,BryWri}.
  
 The effective action is an even element of $\bigwedge V$, so its norm is  a polynomial in $z^{2}$, i.e.
\begin{eqnarray}\label{RGT16}
\|f(t)\|_{z} = \sum_{m=1}^{n}F_{m}(t)z^{2m} 
\end{eqnarray}
for each fixed $t$, where $F_m (t)$ is given by (\ref{NOR10}), with $f_m$ replaced by $f_m(t)$. 
In particular, for $t=0$
\begin{eqnarray}
\|f\|_{z} \equiv \|F^{(0)}\|_{z} = \sum_{m=1}^{n}F_{m}^{(0)}z^{2m} \; .\label{RGT17}
\end{eqnarray}

 For all $k \le n$, the norm coefficients $F_k (t)$ of the effective action are continuous, nonnegative functions of $t$, and $F_0 =0$. 
We set $F_k (t)  = 0$ for $k > n$,  denote the sequence of coefficients by $F(t) = (F_k (t))_{k\in\bN_0}$, and will refer to $F$ as a sequence of length $n$. Note that for $t> 0$, the $F_k$ depend on $n$. We do not write this explicitly to avoid overloading notation, but it is important to keep it in mind, because a main goal in the following is to give $n$-independent bounds for the $F_k$.

\begin{prop}\label{Fseqlem}
For all $t \ge 0$ and $k \in \mathbb{N}$, the norm coefficients of the effective action satisfy
\beq\label{Fseqineq} 
F_{k}(t) 
\le
\Fphz_k (t) 
+
 \sfrac12 \BQ_k (F,F) (t)
\eeq
with
\beq\label{F0kdef}
\Fphz_k (t) 
=
\sum_{m=k}^{n}F_{m}^{(0)}\binom{2m}{2k}\sigma_{(0,t)}^{2(m-k)}
\eeq
and (for $G$ another sequence of finite length)
\beq\label{BQdef}
\BQ_k (F,G) (t)
=
\int_{0}^{t}ds \;
\Vert \dot{A}(s)\Vert \;
\sum_{\substack{l,m\ge 1 \\ l+m \ge k+1}} 
F_{l}(s)\; G_{m}(s)\; \Gamma_{l,m}(\sigma_{(s,t)}) \; ,
\eeq
where
\begin{eqnarray}\label{RGT19}
\Gamma_{l,m}(\xi) := 4lm \sum_{\substack{\textup{$k'$, $k''$ odd} \\k'+k'' = 2k}}\binom{2l-1}{k'}\xi^{2l-1-k'}\binom{2m-1}{k''}\xi^{2m-1-k''} 
=
\xi^{2(l+m-1-k)} \; \Gamma_{l,m} (1) \; .
\end{eqnarray}
The summation range for $l$ and $m$ in (\ref{BQdef}) is infinite, but the sum for $\BQ_k (F,F)$ remains a finite sum because $F$ is of length $n$. 
\end{prop}

The proof of Proposition \ref{Fseqlem} closely follows the proof in \cite{BryWri}. We include it in Appendix \ref{proofofproposition} because this proposition plays a central role in the construction of a majorant for the effective action, and because it is crucial for our purposes to follow evenness properties of the coefficients in detail in the proof. 

The finite sequence $F(t) = (F_k (t))_{k\in \{0, \ldots, n\}}$ of norm coefficients has generating function
\beq\label{Ftzdef}
F(t,z)
=
\sum_{k=0}^n F_k (t) z^{2k}
\eeq
which is a polynomial in $z$. For $z > 0$, it coincides with the norm (\ref{RGT16}).
The finite length $n$ of the sequence $F$ corresponds to the degree $2n$ of its generating polynomial. 

\begin{lemma}\label{Fgenlem}
For all $t \ge 0$ and all $k \in \{0, \ldots, n\}$, $F$ is an even polynomial  of degree $2n$ in $z$, and for $z \ge 0$
\beq\label{Fineq}
F (t,z) 
\le
F^{(0)} (t,z) 
+
\frac12 \int_0^t \dd s\; 
\left(\frac{\del}{\del z} \frac12 \left[ F(s,z+ \sigma_{(s,t)})+ F(s,z- \sigma_{(s,t)})\right]\right)^2
\eeq
with
\beq\label{Fzerdef}
F^{(0)} (t,z)  
=
\frac12 \left( F^{(0)} (z+ \sigma_{(0,t)})  + F^{(0)} (z - \sigma_{(0,t)}) \right) \; .
\eeq
\end{lemma}

\begin{proof}
$F(t,z)$ is an even polynomial in $z$ by its definition (\ref{Ftzdef}). Multiplication with $z^{2k}$ and summation over $k \in \{ 1, \ldots, n\}$ gives
\beq
\begin{split}
\sum_{k=0}^{n}\sum_{m=k}^{n}F_{m}^{(0)}\binom{2m}{2k}\sigma_{(0,t)}^{2(m-k)}z^{2k} 
&= 
\sum_{m=1}^{k}F_{m}^{(0)}\sum_{k=0}^{m}\binom{2m}{2k}\sigma_{(0,t)}^{2(m-k)}z^{2k} 
\\
&= 
\frac{1}{2}\sum_{m=1}^{n}F_{m}^{(0)}[(\sigma_{(0,t)}+z)^{2m}+(\sigma_{(0,t)}-z)^{2m}], 
\end{split}\label{MM19}
\eeq
and similarly
\beq
\begin{split}
&\sum_{k=1}^{n}\sum_{\substack{l,m \\ l+m \ge k+1}} F_{l}(s)F_{m}(s)\Gamma_{l,m}(\sigma_{(s,t)})z^{2k} \\
&=  \sum_{1 \le l,m \le n}4lm F_{l}(s)F_{m}(s)\sum_{\substack{1 \le k' \le 2l-1 \\ 1 \le k'' \le 2m-1 \\ \textup{$k'$, $k''$ odd}}}\binom{2l-1}{k'}\binom{2m-1}{k''}\sigma_{(s,t)}^{2l-1-k'}z^{k'}\sigma_{(s,t)}^{2m-1-k''}z^{k''}  \\
&= \bigg{(}\sum_{m=1}^{n}2mF_{m}(s)\sum_{\textup{$k$ odd}}\binom{2m-1}{k}\sigma_{(s,t)}^{2m-1-k}z^{k}\bigg{)}^{2} \\
&=\bigg{(}\sum_{m=1}^{n}F_{m}(s) \frac{1}{2}\frac{\partial}{\partial z}\bigg{[}(\sigma_{(s,t)}+z)^{2m}+(\sigma_{(s,t)}-z)^{2m}\bigg{]}\bigg{)}^{2} \; .
\end{split}
\label{MM20}
\eeq
\end{proof}

Note that because $\sigma_{(t,t)} = 0$, the left hand side of the inequality can be written as the same linear combination that also appears on the right hand side: $F(t,z) = \frac12 (F(t, z+ \sigma_{(t,t)}) + F(t,z-\sigma_{(t,t)}))$. Moreover, the structure of (\ref{Fineq}) implies that for polynomials $F$ of degree $g>2$ in $z$ that have only positive coefficients, the inequality must by strict for $z > 0$ because the degree of the polynomial on the right hand side is $2g-2$.

\section{The Majorant Method}\label{majorantmethod}

The sequence $(F_k (t))_{k \in \bN_0}$ introduced above provides norms for the coefficients of the effective action for all $k \le n$. As defined the $F_k$ depend on the number of generators, $2n$, of the Grassmann algebra. Our aim is to show $n$-independent bounds for the $F_k$. The strategy for this is to replace the inequalities by equalities in Lemmas \ref{Fseqlem} and \ref{Fgenlem}, and to prove existence, uniqueness and good properties of the solution to the resulting equations, for sufficiently small $t>0$. This yields the following theorem. 

\begin{theo}\label{newmajorant}
Let the polynomial $\Fphz (z)$ be independent of $n$, i.e.\ its degree is independent of $n$ and the coefficients do not depend on $n$.  Then there is $t_0 > 0$ such that for all $t \in [0,t_0]$, a solution of 
\beq\label{phi-eq}
\phi_{k}(t) 
=
\Fphz_k (t) 
+
\frac{1}{2} \int_{0}^{t}ds\;
\Vert\dot{A}(s)\Vert
\sum_{\substack{l,m=1 \\ l+m \ge k+1}}^\infty
\phi_{l}(s)\; \phi_{m}(s)\; \Gamma_{l,m}(\sigma_{(s,t)}), 
\eeq  
exists. For all $k$, $t \mapsto \phi_k(t)$ is continuous and increasing in $t$, and 
\beq\label{F-le-phi}
0 \le F_k (t) \le \phi_k (t)
\eeq
holds for all $k \in \bN_0$ and all $t \in [0,t_0]$. Because $n$ does not appear in (\ref{phi-eq}), (\ref{F-le-phi}) provides $n$-independent bounds for the $F_k$ on the interval $[0,t_0]$. 
\end{theo}

The hypotheses on $\Fphz$ are fulfilled in all fermionic models that have a short-range polynomial interaction of fixed degree, in particular for the quartic interactions that we discuss in Section \ref{quarticapplications} below. Our analysis extends to analytic initial conditions, see Remark \ref{phiblah} below, but some proofs are simpler under the hypotheses we make in Theorem \ref{newmajorant}. 

The existence of a solution of (\ref{phi-eq}) is not obvious because (\ref{phi-eq}) is a nonlinear equation involving infinite series, without an apparent recursive structure in $k$. We prove Theorem \ref{newmajorant} in two steps:
we first prove by iteration that a solution to (\ref{phi-eq}) with the properties stated in the theorem provides the bound (\ref{F-le-phi}), then show existence and uniqueness of the solution of (\ref{phi-eq}).

We construct a sequence of functions $(\Ftn{\ell} (s))_{\ell \in \bN_0}$ as follows. Using the notation (\ref{BQdef}), define $\Ftn{0}_k (s) = F_k (s)$ and 
\beq
\Ftn{1}_k (t) 
=
F^{(0)}_k (t) 
+
\frac12 \BQ_k (\Ftn{0},\Ftn{0})(t)
\eeq
That is, $\Ftn{1}$ is the right hand side of the inequality (\ref{Fseqineq}), so , clearly, $\Ftn{1}_k (s) \ge F_k (s)$ for all $s\ge 0$ and all $k \in \bN_0$. Moreover, $t \mapsto \Ftn{1}_k (t)$ is $C^1$ by our hypotheses on the covariance, and increasing in $t$ because the same holds for $\sigma_{(0,t)}$ and because the integrand is a nonnegative function. Also, $\Ftn{1}$ is still a sequence of finite length, but its length is now $2n-1 \le 2n $. Set 
\beq
\Ftn{2}_k (t) 
=
F^{(0)}_k (t) 
+
\frac12 \BQ_k (\Ftn{1},\Ftn{1})(t)
\eeq
Because $\BQ_k$ is increasing in $F$, $\Ftn{2}_k (s) \ge \Ftn{1}_k (s)$ for all $s\ge 0$ and all $k \in \bN_0$. Moreover, $t \mapsto \Ftn{2}_k (t)$ is increasing in $t$,and $\Ftn{2}_k- F^{(0)}_k (t)$ is $C^2$ and convex because it is the integral of an increasing $C^1$ function. $\Ftn{2}$ is a sequence of length $4n-3 \le 4n$. Continuing the recursion by 
\beq
\Ftn{\ell+1}_k (t) 
=
F^{(0)}_k (t) 
+
\frac12 \BQ_k (\Ftn{\ell},\Ftn{\ell})(t)
\eeq
then gives an increasing sequence of sequences of increasing length and increasing differentiability: 
$\Ftn{\ell}_k- F^{(0)}_k (t)$ is $C^\ell$, $\Ftn{\ell}_k (s) \le \Ftn{\ell+1}_k (s) $.

For every $t \ge 0$ and $k\in \bN_0$, $(\Ftn{\ell}_k (t))_{\ell \in \bN_0}$ is an increasing sequence of positive numbers. Thus it converges if it is bounded above by an $\ell$-independent constant. 

\begin{lemma}
If there is $t_0 > 0$ so that $(\Ftn{\ell}_k (t_0))_{\ell \in \bN_0}$ converges for all $k$, 
then the sequence $(\Ftn{\ell}_k (t))_{\ell \in \bN_0}$ converges for all $t \in [0,t_0]$ and all $k \in \bN_0$. Indeed, convergence is uniform on $[0,t_0]$, and the limiting functions $\phi_k (t) = \lim_{\ell \to \infty} \Ftn{\ell}_k (t)$ are $C^\infty$ in $t$ and satisfy $(\ref{phi-eq})$. 
\end{lemma}

\begin{proof}
Let $k \in \bN_0$. Set $\phi_k (t_0) = \lim_{\ell \to \infty} \Ftn{\ell}_k (t_0)$. Because $(\Ftn{\ell}_k (t_0))_{\ell \in \bN_0}$ is increasing, $\Ftn{\ell}_k (t_0) \le \phi_k (t_0)$ for all $k$. Because $t \mapsto \Ftn{\ell} (t) $ is increasing in $t$, $\Ftn{\ell}_k (t) \le \phi_k (t_0)$ for all $t \le t_0$, too. Thus the increasing sequence $(\Ftn{\ell}_k (t))_{\ell \in \bN_0}$ converges for all $k$ and all $t \in [0,t_0]$. Moreover
\beq\label{unifrate}
0 \le \Ftn{\ell+1 +m}_k (t) - \Ftn{\ell+1}_k (t) 
= 
\frac12 \BQ_k (\Ftn{\ell +m} - \Ftn{\ell}, \Ftn{\ell+m}) (t)
+
 \frac12 \BQ_k ( \Ftn{\ell}, \Ftn{\ell +m} - \Ftn{\ell}) (t) \; .
\eeq
Because the only $t$-dependence of $\BQ_k$ comes from the upper boundary of the integral, and because $\Ftn{\ell +m}_k (s)  - \Ftn{\ell}_k (s) \ge 0$ everywhere, (\ref{unifrate}) implies that $t \mapsto \Ftn{\ell +m}_k (t) - \Ftn{\ell}_k (t)$ is increasing in $t$ for all $\ell$. Thus 
\beq
\sup_{t \le t_0} \left[ \Ftn{\ell +m}_k (t) - \Ftn{\ell}_k (t) \right] \le \Ftn{\ell +m}_k (t_0) - \Ftn{\ell}_k (t_0)
\eeq
and convergence is uniform on $[0,t_0]$. A slight extension of this argument implies convergence of derivatives.

The infinite series in (\ref{BQdef}) have all nonnegative terms, the convergence of  $\Ftn{\ell}_k (t)$ is monotone, and the result is bounded by $\phi_k (t_0)$. Thus the monotone convergence theorem applies to the sums and integrals in $\BQ_k$, and it implies that the limit of the sequence can be taken inside, which implies that the limit fulfils (\ref{phi-eq}).
\end{proof}

\begin{rem}\label{phiblah}
Let $\phz = (\phz_k)_{k \in \bN_0}$ be a sequence such that $\Fphz_k \le \phz_k$ for all $k$, and that there is $R>0$ such that 
\beq
z \mapsto \phz (z) = \sum_{k=0}^\infty \phz_k \; z^{2k}
\eeq
defines an analytic function on $\{ z \in \bC : |z| < 2R\}$. Let $\delta > 0$ be so small that for all $t \le \delta $: $ \sigma_{(0,t)} < R$. For these $t$ set 
\beq
\phz (t,z)  
=
\frac12 \left( \phz (z+ \sigma_{(0,t)})  + \phz (z - \sigma_{(0,t)}) \right) \; .
\eeq
Then the analogue of (\ref{phi-eq}) in which $\Fphz_k (t) $ is replaced by $\phz_k (t)$ has a solution on a sufficiently short interval $[0,t_0']$. 
\end{rem}

The remaining task is to prove convergence at some $t_0 > 0$. We first give a brief motivation for how we do the proof. The above iteration has shown that a sequence of functions $\phi_k (t)$ that solves (\ref{phi-eq}) cannot be of finite length.\footnote{unless it is of length 1, which is uninteresting for our purposes}
Thus the combinatorial generating function 
\beq\label{phigen1}
\phi (t,z) 
=
\sum_{k=0}^\infty \phi_k (t) \; z^{2k}
\eeq
can be defined in analogy to the norm $\Vert F(t)\Vert_z$ introduced in (\ref{RGT16}), but the parameter $z$ is now a formal parameter, i.e.\ $\phi (t,z)$ is a formal power series. Then the same summations as in the proof of Lemma \ref{Fgenlem}, imply 
\beq\label{majeq}
\phi(t,z) = 
F^{(0)} (t,z)
+ 
\frac12 
\int_0^t \dd u\; 
\|\dot{A}(u)\|\bigg{(}\frac{1}{2}\frac{\partial}{\partial z}\bigg{[}\phi(u,z+\sigma_{(u,t)})+ \phi(u,z-\sigma_{(u,t)})\bigg{]}\bigg{)}^{2} \; .
\eeq
This step is not completely rigorous because formal parameters cannot be shifted by real variables in the framework of formal power series. One can avoid this problem by multiplying $\sigma_{(s,t)}^2$ by a formal expansion parameter $h$ and consider a formal expansion in both $z$ and $h$. (The parameter $h$ counts loops and it is the formal analogue of $\hbar$ in quantum field theoretical loop expansions.) 

However, we will not need this formal expansion. Instead, we will construct an analytic function $\phi(t,z)$ of $z$ that satisfies (\ref{majeq}) on an interval $[0,t_0]$ with $t_0 > 0$, as well as $\phi(t,-z) = \phi(t,z)$. It then follows by comparing coefficients of $z$ in the power series that its expansion coefficients $\phi_k (t)$ satisfy (\ref{phi-eq}). Our construction also implies that the $\phi_k (t)$ are all nonnegative, that they are unique, and that they are $C^1$ functions of $t$. From this, Theorem \ref{newmajorant} follows. 

To construct $\phi$, we first note again that because $\sigma_{(t,t)} = 0$, $\phi (t,z) = \frac12 ( \phi (t,z+ \sigma_{(t,t)}) + \phi (t,z- \sigma_{(t,t)}) )$. Thus, introducing 
\beq
\psi (u,t,z) 
=
\frac12 \left[\phi(t,z+\sigma_{(u,t)})+ \phi(t,z-\sigma_{(u,t)})\right] \; ,
\eeq
(\ref{majeq}) now reads
\beq
\psi(t,t,z)
=
F^{(0)} (t,z)
+ 
\frac12 
\int_0^t \dd u\; 
\Vert\dot{A}(u)\Vert
\left(
\frac{\partial \psi(u,t,z)}{\partial z}
\right)^2 \; .
\eeq
Therefore we will get a solution of (\ref{majeq}) if, given $t>0$, we can solve the auxiliary equation
\beq\label{majaux}
\psi(s,t,z) = 
F^{(0)} (t,z)
+ 
\frac12 
\int_0^s \dd u\; 
\Vert\dot{A}(u)\Vert
\left(
\frac{\partial \psi(u,t,z)}{\partial z}
\right)^2  
\eeq
for all $s \le t$. In the auxiliary equation, the dependence on $t$ only enters via the first term, and the $s$-dependence is only in the upper integration boundary. Taking an $s$-derivative gives the initial value problem
 \beq
 \begin{split}\label{HamJacaux}
 \frac{\del \psi}{\del s} (s,t,z)
& =
 \frac12 
 \Vert\dot{A}(s)\Vert
\left(
\frac{\partial \psi(s,t,z)}{\partial z}
\right)^2   
\\
 \psi(0,t,z) 
 &= 
F^{(0)} (t,z)
 \end{split}
 \eeq
in which the $t$-dependence enters only via the initial condition. 
In Appendix \ref{new-AppendixB}, we show that there is $t_0> 0$ such that for all $t \le t_0$, a solution of (\ref{HamJacaux}) exists which is defined for all $s \in [0,t]$ and is analytic and even in $z$. We also give a lower bound on $t_0$ in terms of the initial condition $\Fphz$ and complete the proof of Theorem \ref{newmajorant}.

\bigskip
We now discuss the Hamilton-Jacobi estimate when the initial condition is given by a polynomial of degree four, which is relevant for models with four-fermion-interactions, that is
\begin{eqnarray}
\Fphz(z) = \alpha z^{4} \label{QP1}
\end{eqnarray}
for some $\alpha > 0$. The $t$-dependent initial condition for $u(z) = \frac{\del \psi}{\del z}$  reads
\begin{eqnarray}
u_{0}(z) = 12\alpha \sigma_{(0,t)}^{2}z + 4\alpha z^{3}. 
\label{QP2} 
\end{eqnarray}
Following Appendix \ref{new-AppendixB}, a solution exists if 
$|\tau(s) u'_0(0)| < \eta < 1$, that is, 
\beq\label{taubou}
|\tau(s)| < \frac{\eta}{12 \alpha \sigma_{(0,t)}^{2} }\; .
\eeq
By choosing $\eta<1$ sufficiently small, the further condition $|\rho(\tau(s),w)| < 1$, where $\rho$ is defined in (\ref{rhodef}), and $\tau$ is the function defined in (\ref{varchange}), is ensured.

\section{Application to Fermions with Local Quartic Interactions}\label{quarticapplications}

The theorems proven above imply that, when the solution of the Hamilton-Jacobi equation for the majorant exists, the fermionic effective action is analytic in the fields and in the initial interaction, uniformly in the dimension of the Grassmann algebra (i.e.\ the number of generators). To apply this to quantum field theoretical models of fermions, we first connect it to the standard formulations of such models. We then show how standard perturbative power counting bounds for a fermionic model with a $\psi^4$ interaction can be proven using the majorant method. We do this for the RG-irrelevant terms, by implementing the scale dependence properly in the majorant method. A full renormalization analysis would be beyond the scope ot this paper, but is underway. In the following we will use some of the notation that has become standard in Polchinski theory. 

For relativistic fermions or their Euclidian counterparts, the fields are Dirac spinors and the kinetic term in the action contains the Dirac operator. In order to have as little extra complications as possible, we focus here instead on the fermionic analogue of the simplest scalar field theory in the Euclidian, where the kinetic term is given by the Laplacian, and show scale-dependent bounds that are direct analogues of those proven, e.g.\ in \cite{Polch,KKS} for scalar field theory itself. Of course, a fermionic model with such a Laplacian as kinetic term is not reflection-positive, so it will not satisfy the Osterwalder-Schrader axioms.  

\subsection{Regularized $\Psi^{4}_{d}$ Theory}\label{phi4d}
There are many ways to regularize formal functional integrals of quantum field theory to get a mathematically well-defined integral as a starting point, and to introduce a scale-dependent flow. Here, as in \cite{Sa98}, we choose a double regularization: we replace space ${\mathbb R}^d$ by a finite lattice ${X}$, and also introduce a momentum space regulator at a large scale $\Lambda_0 > 0$ that removes singularities in the covariance at coinciding points, in the usual way employed in the Polchinski setup \cite{Polch, KKS, Sa98}. The finite lattice is introduced to make the field algebra and the integrals finite-dimensional and obviously well-defined. The renormalization problem for the model is about the $\Lambda_0$-dependence. 

The procedure is then to show first that the infinite-volume limit $L \to \infty$ and the continuum limit $\veps \to 0$ of the effective action exist at fixed $\Lambda_0$, hence defining a UV-regularized, infinite-volume effective action that has full Euclidian symmetry, and then to take the limit $\Lambda_0 \to \infty$. 

At fixed $\Lambda_0 < \infty$ and $\mass > 0$, proving convergence of the effective action in the limits $L \to \infty$ and $\veps \to 0$ is not difficult, and similar to the proof given in \cite{Sa98} (see below). Proving that the limit $\Lambda_0 \to \infty$ exists requires placing counterterms, i.e.\ posing a $\Lambda_0$-dependent initial condition for the interaction and showing that the $\Lambda_0$-dependence can be chosen such that the effective action has a finite limit. Polchinski's proof \cite{Polch}, in the form of \cite{KKS}, achieves this in a formal perturbation expansion by introducing a flow parameter $\Lambda = \Lambda_s = \Lambda_0 e^{-s}$ which effectively imposes an infrared cutoff, and taking $s$ from zero to infinity, corresponding to a successive integration over fields and a corresponding RG flow of effective actions (and hence correlation functions). 

Let $\alatt > 0$ and $L >0$ such that $L/\alatt$ is integer, and let $X$ be the discrete torus $X = \alatt \bZ^d / L \bZ^d$. That is, $X$ is a finite hypercubic lattice ${X}$ of spacing $\alatt$ and sidelength $L$ with periodic boundary conditions To each lattice point $x$, we associate Grassmann generators $(\psq_\alpha (x), \ps_\alpha (x))_{\alpha \in \{1, \ldots, D\}}$. Here $D\ge 2$ is a fixed integer (the dimension of the spinor). We require $D \ge 2$ so that a nonvanishing quartic term local in $x$ can be formed by the nilpotent Grassmann variables. Denote ${\bX} = \{1, \ldots , D\}  \times X$.
The action of the Euclidian $\Psi^4$ model reads
\beq\label{psi4action}
S
=
\int_X 
\left(
\sfrac12 
{\textstyle \sum\limits_{\alpha = 1 }^D} 
\psq_\alpha (x) (-\Delta + \mass^2) \ps_\alpha (x)
+
\sfrac{\lambda}{4} \; 
(\psq\psi (x))^2
\right)
\eeq
with $\psq\psi (x) = \sum_{\alpha = 1 }^D \psq_\alpha (x) \ps_\alpha (x)$, $\Delta$ the lattice Laplacian, $\mass > 0$ the mass parameter, $\lambda > 0$ the coupling constant, and the integral denoting $\alatt^d$ times a sum over lattice points. 

The total number of Grassmann generators, $2n= 2D(L/\veps)^d$, is finite. 
Let $\nu: \bX \to \{1, \ldots, n\}$ be any enumeration of $\bX$ and set $\psi_{\nu(\alpha,x)} = \ps_\alpha (x)$ and 
$\psi_{n+\nu(\alpha,x)} = \psq_\alpha (x)$. Then we are in the framework of the previous sections, and all the results there apply.\footnote{When using this representation, we absorb the factors $\alatt^d$ in front of the lattice sums into the summands} In particular, the effective action, which generates all connected amputated Green functions, exists. Moreover, it is independent of the enumeration.

\subsection{Setup of the renormalization group flow}\label{flose}
Let $\Lambda_0 > \mass $ be a large positive number, $s \ge 0$, and $\Lambda_s = \Lambda_{0}\; e^{-s}$. 
The $s$-dependent Grassmann Gaussian covariance $A_s$ is defined as follows. 

For $p \in {\mathbb R}^d$ let 
\beq
\hat C (p) 
=
\frac{1}{\hat D_\alatt (p) +\mass^2}
\eeq
with $\hat D_\alatt (p) = \sfrac{2}{\alatt^2} \sum_{\mu =1}^d (1 - \cos (\alatt p_\mu))$. Set
\beq
\hat C_s (p)
=
\hat C (p) \; 
\left( e^{-\frac{p^{2}+\mass^2}{\Lambda_0^{2}}} - e^{-\frac{p^{2}+\mass^2}{\Lambda_s^{2}}}\right) 
\eeq
and 
\beq\label{Csdef}
C_s (x,x') 
=
L^{-d} \sum_{p} e^{\I p (x-x')}\; \hat C_s (p)
\eeq
with $p$ summed over the torus $\mathcal{B}  = \frac{2\pi}{L} {\mathbb Z}^d / \frac{2\pi}{\alatt} {\mathbb Z}^d$ 
(equivalently, over $[-\frac{\pi}{\alatt} , \frac{\pi}{\alatt})^d \cap \frac{2\pi}{L} \bZ^d$)
Evidently, $C$ is symmetric in $x$ and $x'$, and it depends only on $x-x'$. 
Let $\veps_{01} = - \veps_{10} =1$ and $\veps_{00}=\veps_{11} = 0$, and define for charge index $c \in \{0,1\}$ (where $c=0$ corresponds to $\psi$ and $c=1$ to $\psq$)
\beq\label{GA29}
(A_s)_{cn+\nu(\alpha,x), c'n+\nu(\alpha',x')}     
=
\veps_{cc'} \; \delta_{\alpha\alpha'}\; C_s (x,x') \; .
\eeq
Then, for $s\to \infty$ and $\Lambda_0 \to \infty$, $A_0$ tends to the unregularized covariance of the action (\ref{psi4action}) at $\lambda =0$. The covariance at $s=0$ vanishes, so no fields have been integrated over at $s=0$. For $s\to\infty$, the covariance tends to the $\Lambda_0$-regularized covariance without infrared cutoff, so in this limit, the fields on all scales have been integrated over.

For $t \ge s$, set 
\beq
C_{s,t} = C_s -C_t = - \int_s^t \dot C (s') \; ds' 
\eeq
where we have again denoted the $s$-derivative by a dot, and $A_{s,t} = A_s -A_t$. 

In the limit $\veps \to 0$ at fixed $\Lambda_0$ and $s$, 
\beq
\hat C_s (p)
=
\frac{1}{p^2+\mass^2}\;
\left( e^{-\frac{p^{2}+\mass^2}{\Lambda_0^{2}}} - e^{-\frac{p^{2}+\mass^2}{\Lambda_s^{2}}}\right) 
\eeq
and 
\beq
\sfrac{d}{ds}\; \hat C_s (p)
=
\sfrac{2}{\Lambda_s^2} \; 
e^{-\frac{p^{2}+\mass^2}{\Lambda_s^2}}
\eeq

\subsection{The determinant and decay bounds}
We use a Gram bound $\gamma_{s,t}$ for the covariance $\dot C_{s,t}$ to estimate $|\E_{\mu_{A_{[s,t]}}}[\Psi_{J}]| \le \gamma_{s,t}^{|J|}$, in analogy to  (\ref{RGT8}). We now briefly recall the Gram bound and calculate $\sigma_{(s,t)}^{2}$. The fermionic quadratic form corresponding to $C_{s,t}$ is $\int_{x,x'} \sum_{\alpha,\alpha'} \psq_\alpha (x) \delta_{\alpha,\alpha'} \; C_{s,t}(x,x') \; \ps_{\alpha'} (x') $. Let $e_1, \ldots e_D$ be any orthonormal basis of $\bC^D$, $\cH = \bC^D \otimes L^2 (\bR^d)$ and 
\beq
g_{x} (p) 
=
e^{-\I px}\; \sqrt{\hat C_{s,t} (p)} \; .
\eeq
Then $e_\alpha \otimes g_{x} \in \cH$ for all $\alpha$ and $x$, and 
\beq
\delta_{\alpha,\alpha'} \; C_{s,t}(x,x') = \langle e_\alpha \otimes g_{x} \mid e_{\alpha'} \otimes g_{x'} \rangle_{\cH} \; .
\eeq
Thus $C_{s,t}$ has a Gram constant $\gamma_s=\| g_{x}\|_2$ and hence
\beq\label{PHI4.8}
\gamma_{s,t}^2 = \int d^d p \; | {\hat C}_{s,t} (p) | = \|  {\hat C}_{s,t} \|_1
\eeq
Because
\begin{eqnarray}
\|\sfrac{d}{ds} \hat{C}_{s}\|_{1} = \frac{2}{\Lambda_{s}^{2}}e^{-\frac{\mass^2}{\Lambda_{s}^{2}}}\int e^{-\frac{|p|^{2}}{\Lambda_{s}^{2}}}d^dp = 2\pi^{\frac{d}{2}}e^{-\frac{\mass^2}{\Lambda_{s}^{2}}}\Lambda_{s}^{d-2}, \label{PHI4.9}
\end{eqnarray}
it follows that
\beq\label{PHI4.10}
\|  {\hat C}_{s,t} \|_1
\le 
\int_{s}^{t}ds' \|\sfrac{d}{ds'} \hat{C}_{s'}\|_{1}
=
2\pi^{\frac{d}{2}}\int_{s}^{t}ds'e^{-\frac{\mass^2}{\Lambda_{s'}^{2}}}\Lambda_{s'}^{d-2} \le 2\pi^{\frac{d}{2}}\int_{s}^{t}ds'\Lambda_{s'}^{d-2} = \rho_{d}(\Lambda_{s}^{d-2}-\Lambda_{t}^{d-2}),
\eeq
where we introduced
\begin{eqnarray}
\rho_{d} := \frac{2\pi^{\frac{d}{2}}}{d-2}. \label{PHI4.11}
\end{eqnarray}
Hence, the first parameter of our theory is found
\begin{eqnarray}
\sigma_{(s,t)}^{2} = \rho_{d}(\Lambda_{s}^{d-2}-\Lambda_{t}^{d-2}). \label{PHI4.12}
\end{eqnarray}

For the decay bound we need to obtain the norm $\|\dot{C}_{s}\|_{1,\infty}$. Because $\dot{\hat{C}} = \frac{2}{\Lambda_s^2}\; e^{- \frac{p^2 + \mass^2}{\Lambda_s^2} }$ is a Gaussian in $p$, $\dot C$ is Gaussian, too, hence nonnegative, so
\beq
\int_{\mathbb{R}^{d}}d\xx |\dot{C}_{s}(\xx)| = 
\int_{\mathbb{R}^{d}}d\xx \dot{C}_{s}(\xx) =
\dot{\hat{C}}_{s}(0) = \frac{2}{\Lambda_{s}^2}e^{-\frac{\mass^2}{\Lambda_{s}^{2}}}, \label{PHI4.13}
\eeq
we get
\begin{eqnarray}
\|\dot{C}_{s}\|_{1,\infty} \equiv \|\dot{C}_{s}\| = \frac{2}{\Lambda_{s}^2}e^{-\frac{\mass^2}{\Lambda_{s}^{2}}}. \label{PHI4.14}
\end{eqnarray}
Note that the latter vanishes exponentially fast as $\frac{\mass^2}{\Lambda_{s}^{2}} \to +\infty$. It is also uniformly bounded in $s$ because
\begin{eqnarray}
\frac{2}{\Lambda_{s}^2} \; e^{-\frac{\mass^2}{\Lambda_{s}^{2}}} \le \frac{2}{\mass^2}\sup_{\xi \ge 0}\xi e^{-\xi} \le \frac{2}{\mass^2}.\label{PHI4.15}
\end{eqnarray}

We have given these bounds for the continuum and infinite-volume limit of $C_{s,t}$ and $\dot C_s$. Similar bounds also hold for $\alatt > 0$ and $L < \infty$. The replacement in the Gram bound is that $L^2 (\bR^d)$ gets replaced with $L^2(L^{-1} \bZ^d \cap [-\frac{\pi}{\veps}, \frac{\pi}{\veps})^d)$. Because $\Lambda_s < \infty$ and $\mass > 0$, the function $p \mapsto \hat C_{s,t} (p)$ is a Schwartz function. Thus the Gram bound for $\alatt > 0$ and $L < \infty$ converges to the one calculated above for $L\to \infty$ and $\alatt \to 0$, hence is also uniformly bounded in $\alatt$ and $L$.   It is standard to prove finiteness of the decay constant using summation by parts-techniques (see, e.g.\ \cite{FKTbook} and \cite{MSbook}). We omit that proof here to keep the presentation concise. The essential point is that $\sigma_{(s,t)}^{2} $ and $\|\dot{C}_{s}\|_{1,\infty}$ are bounded uniformly in $\alatt$ and $L$, because $\Lambda_s < \infty$ and $\mass > 0$. Because the $L$ dependence is that of a Riemann sum converging to the integral of a Schwartz function, and because of the exponential decay introduced by $\Lambda_s$, the values of $\sigma_{(s,t)}^{2} $ and $\|\dot{C}_{s}\|_{1,\infty}$ are continuous in $\alatt$ and $1/L$, hence the estimates above apply with a prefactor arbitrarily close to $1$ also for sufficiently large volume and small lattice spacing. A dominated-convergence argument then shows that the limits $L\to \infty$ and $\alatt \to 0$ of the effective action 
\beq
F_0 (\phi) 
=
- \log \left(
\mu_{A_0} * e^{-I_0}
\right)
\eeq
with $I_0 (\psq,\ps) = \frac{\lambda}{4} \int_X (\psq\ps)(x)^2$
exist for small enough $|\lambda|$, and satisfy the majorant bound with the same constants as above, hence are analytic in $\lambda$. The radius of analyticity depends on $\Lambda_0$. In the next section, we give a lower bound for it for a skeleton flow, where all two-point corrections are projected out of the flow. 

\subsection{Scale-dependent bounds in $\Psi_{4}^{4}$ Theory}\label{scalingphi4}

Because the majorant for the effective action is obtained by solving a Hamilton-Jacobi equation,  one wants to have coefficient functions in this Hamilton-Jacobi equation that are of order one (\enquote{dimensionless} in the RG terminology), so they do not diverge in the limit one wants to take. We scale the $F_{m}(t)$ in (\ref{RGT16}) by a power of $\Lambda_t$, i.e.\ write 
\beq\label{tilF}
F_{m}(t)=\Lambda_{t}^{a+2bm} \tilde F_{m}(t) \; .
\eeq 
Here $a$ and $b$ are real numbers that depend only on the dimension $d$. The effect of scaling these norm coefficients is that both parameters $\sigma_{(0,t)}$ and $\|\dot{A}(t)\|$ will get rescaled, too.  The aim is to find $a$ and $b$ such that, after this rescaling, the coefficients in the Hamilton-Jacobi equation will be bounded uniformly in $t$. The values of $a$ and $b$ that we shall take are the ones suggested by the results from perturbative renormalization \cite{MSbook}. 

This scaling also means that the initial condition is taken as $\tilde F_m (0) = \tilde F_m^0$, where the $\tilde F_m^0$ are given and bounded independently of $\Lambda_0$, for $m \ge 3$. For $\tilde F_2^0$, we make the same hypothesis here, but in a complete treatment, this function may contain counterterm contributions, which depend on $\Lambda_0$. In this work, we set $\tilde F_1^0 = 0$. 

Let us show in more detail how we rescale $\sigma_{(0,t)}$ and $\|\dot{A}(t)\|$ by scaling the norm coefficients of $F(t,\Psi)$. As we know, the latter satisfies Polchinski's equation, so its norm coefficients $F_{m}(t)$ satisfy (\ref{Fseqineq}) in Proposition \ref{Fseqlem}. Let $s \le t$ and 
\beq
\tilde\sigma_{(s,t)}  = \Lambda_s^b \; \sigma_{(s,t)} \; .
\eeq
Insert (\ref{tilF}) into (\ref{Fseqineq}) and divide by the factor $\Lambda_t^{a+bk}$, to get an inequality for $\tilde F_k (t)$. The first term on the right hand side of (\ref{Fseqineq}) becomes
\begin{eqnarray}\label{linterm}
\sum_{m=k}^{n}\frac{\Lambda_{0}^{a+2bm}}{\Lambda_{t}^{a+2bk}} \; \binom{2m}{2k}\; \tilde F_{m}^{(0)}\; \sigma_{(0,t)}^{2(m-k)} 
=
\left(\frac{\Lambda_t}{\Lambda_0}\right)^{-a -2bk} \;
\sum_{m=k}^n 
\binom{2m}{2k}\;
\tilde F_m^0\;
\tilde\sigma_{(0,t)} ^{2 (m-k)}
\end{eqnarray}
By similar rearrangements, and using (\ref{PHI4.14}), the second term on the right hand side of (\ref{Fseqineq}) becomes
\beq\label{quadterm}
\int_0^t ds\; 
\left(
\frac{\Lambda_t}{\Lambda_s}
\right)^{-a-2bk} \;
e^{-\frac{\mass^2}{\Lambda_s^2}} \;
\Lambda_s^{a + 2b -2} \; 
\sum_{\substack{l,m \\ l+m \ge k+1}} \tilde F_{l}(s)\; \tilde F_{m}(s)\;
\Gamma_{l,m}(\tilde\sigma_{(s,t)}) \; .
\eeq
For all $k$ for which $a+2bk \le 0$, and all $0 \le s \le t$, $(\frac{\Lambda_t}{\Lambda_s})^{-a-2bk} \le 1$ can be used both in (\ref{linterm}) and (\ref{quadterm}), so that for these $k$,
\beq\label{goodk}
\tilde F_k (t) 
\le
\sum_{m=k}^n 
\binom{2m}{2k}\;
\tilde F_m^0\;
\tilde\sigma_{(0,t)} ^{2 (m-k)}
+
\int_0^t ds\; 
e^{-\frac{\mass^2}{\Lambda_s^2}} \;
\Lambda_s^{a + 2b -2} \; 
\sum_{\substack{l,m \\ l+m \ge k+1}} \tilde F_{l}(s)\; \tilde F_{m}(s)\;
\Gamma_{l,m}(\tilde\sigma_{(s,t)}) \; .
\eeq
We now specialize to the case $d=4$. Perturbatively, the model with a local quartic interaction is then just renormalizable, i.e.\ rendering the Schwinger functions finite in the limit $\Lambda_0 \to \infty$ requires only three types of counterterms (mass, field strength, and coupling), but new counterterms of these types are required in every order in perturbation theory \cite{KKS,MSbook}. The analysis that proves this also suggests which exponents to choose: we set $a = 4$ and $b=-1$. Then $a+2b-2=0$ and $a+2bk = 4-2k \le 0$ for all $k \ge 2$, so that leaving out $k=1$ allows us to use (\ref{goodk}). By (\ref{PHI4.12}) with $d=4$, the effect of scaling is that $\sigma_{(s,t)}$ gets replaced with 
\beq
\tilde\sigma_{(s,t)} = \Lambda_s^{-1}\; \sigma_{(s,t)} \le \sqrt{\rho_4}
\eeq
and $\|\dot A (s) \|$ gets replaced with
\beq
a(s) = e^{-\frac{\mass^2}{\Lambda_s^2}} \le 1\; .
\eeq
Thus with this choice of $a$ and $b$, rescaling has made the coefficients in the integral inequality (\ref{goodk}) uniformly bounded in $s$ and $t$. 

To prove analyticity, we cannot deal with individual values of $k$ separately but use the generating function that gives the majorant, i.e.\ sum over $k$. Since the case $k=1$ is not covered by the inequality, we can only consistently sum over $k$ if we also change the right hand side of the equation by leaving out the all terms of the sum where $l=1$ or $m=1$. In a graphical expansion, this amounts to leaving out the {\em two-point insertions}. This procedure, which corresponds to a projection $\cP_{\ge 2}$ on the field algebra where quadratic terms are mapped to zero, is standard as a first step in perturbative studies of the Polchinski equation and in constructive renormalization group. Specifically, the above-mentioned truncation means that we modify Polchinski's equation (\ref{RGT15}) by applying the projection $\cP_{\ge 2}$ to its right hand side. Including the quadratic terms requires a much more detailed analysis: the flow of the terms with $m \le 2$ has to be traced in more detail, and growing terms need to be controlled by appropriately changed initial conditions (renormalization by counterterms). We are not addressing this renormalization procedure in this work; it requires, among others, estimates for more general weighted norms and Taylor expansion arguments. 

The change of variables (\ref{varchange}) is now replaced with
\beq
t \to \tilde\tau (t)  = \int_0^t a(s) \; ds \; .
\eeq 
Since $a$ is smooth and $a(s) \ge 0$ for all $s$, $\tilde \tau$ is smooth and strictly increasing in $t$. 
Because $0 \le a(s) \le 1$, $\tilde\tau (t) \le t$. The exponential decay of $a(s)$ at large $s$ implies that $\tilde\tau$ is a bounded function:
\beq\label{ttaubou}
\tilde\tau(t) \le \frac{1}{2e} + \min\{ t, \ln \frac{\Lambda_0}{m} \}
\eeq
For $t \le t_{\Lambda_0} = \ln \frac{\Lambda_0}{m}$, this bound follows from $\tilde\tau (t) \le t$.
For $t > t_{\Lambda_0}$, split the integration interval into $[0,t_{\Lambda_0})$ and $[t_{\Lambda_0},t]$, and use the same estimate on the first interval to get
\beq
\tilde\tau (t) \le \ln \frac{\Lambda_0}{m} + \int_{t_{\Lambda_0}}^t a(s) \; ds 
\eeq
Substituting $s = t_{\Lambda_0} + \sigma$ in the integrand gives
\beq
\int_{t_{\Lambda_0}}^t a(s) \; ds
=
\int_{\Lambda_0}^{t-t_{\Lambda_0}} e^{-e^{2\sigma}} \; d\sigma
\le
\int_{\Lambda_0}^{\infty} e^{-1-2\sigma} \; d\sigma
=\frac{1}{2e} \; .
\eeq
so (\ref{ttaubou}) holds. Equation (\ref{ttaubou}) makes precise the idea that the flow ``essentially stops'' at $t_{\Lambda_0}$, i.e.\ when $\Lambda_t$ goes below $m$. 

In condition (\ref{taubou}), $\tilde\tau(t)$ and $\tilde\sigma_{(0,t)}$ now appear instead of $\tau(t)$ and $\sigma_{(0,t)}$. The coefficient $\alpha$ of the quartic interaction does not receive a scaling factor. Thus
\beq\label{PHI44.18}
|\tilde\tau(t)| < \frac{\eta}{12 \alpha \tilde\sigma_{(0,t)}^{2} }\; .
\eeq
Rearranging this as an inequality for $\alpha$,  we have proven that for
\beq
\alpha < \frac{\eta}{12 \rho_4 (1+ \ln \frac{\Lambda_0}{m})}
\eeq
the majorant converges and hence the generating function obtained from the Polchinski equation (\ref{RGT15}) with $\cP_{\ge 2}$-projected right hand side is analytic in the fields. This holds for general (not only quartic) interactions $\tilde F_m^{(0)}$, provided the norm of $\tilde F^{(0)}$ is small enough. (Recall that this means that interaction terms of degree $2m > 4$ get inverse powers of $\Lambda_0$, i.e.\ $F_m^{(0)} = \Lambda_0^{4-2m} \tilde F_m^{(0)}$.)

This bound does not allow to take the limit $\Lambda_0 \to \infty$, and we now discuss the reasons for this. At face value, the main reason is that $\tilde\tau(t)$ grows linearly in $t$, i.e.\ logarithmically in $\Lambda$, before levelling off at about $\ln \frac{\Lambda_0}{m}$, and one may wonder whether one can avoid this by doing more careful bounds. Indeed, the estimate $(\frac{\Lambda_t}{\Lambda_s})^{2k-4} \le 1$ is wasteful for $2k-4 > 0$, and a more careful bound (first integrating over $s$ and then cancelling the power of $\Lambda_t$) removes this logarithm for all $k \ge 3$. Thus in an even stronger truncation that fixes $F_2(t)$ to its initial value $F_2^0$, analyticity can be shown uniformly in $\Lambda_0$. However, for $k=2$ the above bound is an equality, and one can verify in perturbative calculations that this bound is saturated for $k=2$, so that a more detailed analysis is required to determine the signs in the flow of $f_2(\Psi)$, to decide (depending on the sign of $\alpha$ whether there really is a pole when $\alpha \ln\frac{\Lambda_0}{m}$ becomes of order one (known as the Landau pole), or whether asymptotic freedom holds, in which case the upper bound for $\alpha$ is uniform in $\Lambda_0$. 

\section{Conclusion}\label{conclusion}

Wilsonian renormalization in the form of Polchinski's equation requires the solution of a nonlinear heat equation in a very high-dimensional space. The high dimensionality corresponds to the very large number of degrees of freedom in models of quantum field theory and statistical mechanics when a regularization is present. In the limit of interest, when the regularization is removed, the dimension of this space tends to infinity, and the equation formally becomes a functional differential equation. The solution of Polchinski's equation generates all correlation functions of the model, hence gives full information about the quantum field theory. In rigorous studies, it is necessary to give tight bounds on all correlation functions. The art of the game is to do this without needing to track all details, which would be impossible in most interesting cases.  

The majorant method introduced in \cite{BryKen,BryWri} very elegantly provides bounds for this nonlinear partial differential equation (PDE)  in terms of a PDE involving only two variables, the flow time $t$ and a variable $z$ that, roughly speaking, corresponds to the overall size of the field, thus allowing for convergence proofs that are much simplified, yet retain the power of the renormalization group method. The first purpose of our work was to revisit \cite{BryWri} and to address the gap in their method of controlling Polchinski's equation. For this reason, we followed \cite{BryWri} closely,  also taking into account some remarks from \cite{SW}. In particular, we used the same family of norms on the Grassmann algebra; the only relevant difference is that, in our work, the latter was defined as a power series of the norm parameter $z^{2}$ instead of $z$. This is of course possible because the renormalization group transformation preserves parity. Although simple, this modification is essential to develop the method. As mentioned in \cite{SW} and reiterated in Section \ref{rgtransformation} of the present paper, the gap in \cite{BryWri} has its roots in the definition of $\sigma_{(s,t)}$, which involves a square root, hence makes the function behave non-smoothly as $\sqrt{|t-s|}$ when $s \to t$. In our analysis, we have taken great care to keep the feature that only $\sigma_{(s,t)}^{2}$ appears in the bounds of the flow equation. 

Specifically, implementing these evenness restrictions entails that the terms in the majorants have a ``backward-forward'' symmetry in $z$. That is, both backward $\phi(s,\sigma_{(s,t)}-z)$ and forward $\phi(s,\sigma_{(s,t)}+z)$ translations in $z$ arise naturally in the bounds, because only the sum of the two is a function of $\sigma_{(s,t)}^2$. Of course, because the method is all about estimating norms, we could have employed further bounds on $\phi(s,\sigma_{(s,t)}-z)$ so to get only contributions for a forward translation $\phi(s,\sigma_{(s,t)}+z)$ instead, as done in \cite{BryWri}. However, this would make odd powers of $\sigma_{(s,t)}$ reappear. 

An important difference between our majorant equation and the one initially obtained in \cite{BryWri} is that in our equation, the final time also appears explicitly in the term involving the initial condition, thus in a sense we have a terminal as well as an initial problem. We deal with this by using an auxiliary equation which puts the dependence on the final time in the initial condition, but which is a standard Hamilton-Jacobi equation in the flow time $s \le t$ otherwise. Apart from the special nature of the initial condition, the Hamilton-Jacobi equation is the same as the one obtained in \cite{BryKen} for bosons. It corresponds to a nonconvex Hamiltonian, and the flow develops singularities in finite time, so one can only get short-time existence. In the case of bosons, this means that we have lost relative signs between tree graphs in the expansion of the effective action (or Mayer series, in this case). 

The second purpose of this paper was to combine the majorant with scaling ansatzes, to show how it can be used for quantum field theoretical models in spite of the short-time restriction on the solution. We have made this explicit for a fermionic $\phi^{4}$ model, and given estimates for a flow in which corrections of self-energy type are projected out. This does not provide a full construction of such a model, but it gives rigorous power counting bounds for the contributions that are, when expanded in Feynman graphs, usually called ``completely convergent graphs''. Not unexpectedly, in this field-theoretic application, the short-time restriction on the solution of the flow equation leads to a weak-coupling condition for the initial interaction, which in four dimensions displays a logarithmic behaviour on the ultraviolet cutoff. The projection can be avoided by a more careful analysis, which is the subject of ongoing work. 

\begin{appendices}

\section{Proof of Proposition \ref{Fseqlem}}\label{proofofproposition}

This section is dedicated to the demonstration of Proposition \ref{Fseqlem}. For the reader's convenience, we restate this proposition below.

\begin{prop} For each fixed $t \in [0,T]$ and $k \in \mathbb{N}$, the norm coefficients of the effective action satisfy
\begin{eqnarray}
F_{k}(t) \le \sum_{m=k}^{n}F_{m}^{(0)}\binom{2m}{2k}\sigma_{(0,t)}^{2(m-k)}+\frac{1}{2}\int_{0}^{t}ds ||\dot{A}(s)|| \sum_{\substack{l,m \\ l+m \ge k+1}} F_{l}(s)F_{m}(s)\Gamma_{l,m}(\sigma_{(s,t)}), \nonumber
\end{eqnarray}
with
\begin{eqnarray}
\Gamma_{l,m}(\xi) := 4lm \sum_{\substack{\textup{$k'$, $k''$ odd} \\k'+k'' = 2k}}\binom{2l-1}{k'}\xi^{2l-1-k'}\binom{2m-1}{k''}\xi^{2m-1-k''}. \nonumber
\end{eqnarray}
\end{prop}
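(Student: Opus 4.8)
The plan is to recast Polchinski's equation (\ref{RGT15}) as an equivalent integral (Duhamel) equation and then read off bounds for the norm coefficients $F_k(t)$ of Definition \ref{normsgrassmann}. Splitting the right-hand side of (\ref{RGT15}) into its linear part $\tfrac12\Delta_{\dot{A}(t)}F$ and the quadratic remainder $Q(F):=-\tfrac12\langle\nabla F,\dot{A}(t)\nabla F\rangle-\tfrac12\Delta_{\dot{A}(t)}F|_{\Psi=0}$, and recalling from the Heat Kernel Formula (\ref{GA28.1}) together with the semigroup property (\ref{RGT3}) that the linear evolution from scale $s$ to scale $t$ is exactly the convolution $\mu_{A_{[s,t]}}*$, one obtains
\[
F(t)=\mu_{A_{[0,t]}}*F^{0}+\int_{0}^{t}\mu_{A_{[s,t]}}*Q(F(s))\,ds.
\]
First I would expand both sides in the monomials $\Psi_J$, isolate the coefficient of degree $2k$, and apply the norm (\ref{NOR10}); the vacuum subtraction inside $Q$ only touches the degree-zero part and hence leaves the coefficients with $k\ge1$, and the normalization $F(t,0)=0$, intact.

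For the linear term, the convolution lowers the degree: the projection of $\mu_{A_{[0,t]}}*\Psi_J$ (with $|J|=2m$) onto degree $2k$ contracts $2(m-k)$ of the fields, which by the Gram bound (\ref{RGT8}) costs at most $\sigma_{(0,t)}^{2(m-k)}$. Fixing one index as in the norm, the count of surviving legs produces a factor $\binom{2m-1}{2k-1}$ relative to the mass $2m\,F_m^0$ of the degree-$2m$ part, and the identity $\tfrac{2m}{2k}\binom{2m-1}{2k-1}=\binom{2m}{2k}$ converts the $\tfrac{1}{2k}$ weight of $F_k$ into the stated $\binom{2m}{2k}$; summing over $m\ge k$ reproduces precisely the first sum $\sum_{m\ge k}\binom{2m}{2k}\sigma_{(0,t)}^{2(m-k)}F_m^0$.

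The quadratic term carries the real combinatorial weight. Here $\langle\nabla F,\dot{A}(s)\nabla F\rangle=\sum_{ij}\dot{A}_{ij}(s)\,\partial_{\psi_i}F\,\partial_{\psi_j}F$ pairs two gradient factors; since $F$ is even by Proposition \ref{parity}, the homogeneous pieces $\partial_{\psi_i}F^{(2l)}$ and $\partial_{\psi_j}F^{(2m)}$ have odd degrees $2l-1$ and $2m-1$. Bounding the matrix contraction by $\|\dot{A}(s)\|$ and collecting the index-counting of the two derivatives yields the weight $\|\dot{A}(s)\|$ and the prefactor $4lm=2l\cdot2m$; the subsequent convolution $\mu_{A_{[s,t]}}*$ contracts the remaining fields down to degree $2k$, and the Gram bound (\ref{RGT8}) applied to the contracted set supplies the powers $\sigma_{(s,t)}^{2l-1-k'}\sigma_{(s,t)}^{2m-1-k''}$, while the choices of which legs survive give the binomials $\binom{2l-1}{k'}\binom{2m-1}{k''}$. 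The constraint $l+m\ge k+1$ is the requirement that the product degree $2(l+m-1)$ be at least $2k$, and the restriction to odd $k',k''$ with $k'+k''=2k$ is dictated by the odd degree of the two gradient factors, so that the contribution assembles into $\Gamma_{l,m}$ exactly as defined in (\ref{RGT19}).

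The step I expect to be the main obstacle is precisely this accounting for the quadratic term: one must simultaneously track the three contraction types---within each factor, and across the two factors---verify that all of them are uniformly controlled by the single Gram bound (\ref{RGT8}), and show that the signs and per-index normalizations collapse into the clean binomial-times-$\sigma$ structure with the odd-parity restriction rather than into a weaker estimate. Additional care is needed to keep the convolution tied to the correct interval $[s,t]$ dictated by the semigroup property (\ref{RGT3}), and to justify interchanging the sums over $J$ (finite, by finite-dimensionality) with the $s$-integration. These are the technical points that the appendix must carry out in full.
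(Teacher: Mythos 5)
Your proposal follows essentially the same route as the paper's Appendix A: the Duhamel integral form of Polchinski's equation, the Gram bound (\ref{RGT8}) together with the identity $\tfrac{2m}{2k}\binom{2m-1}{2k-1}=\binom{2m}{2k}$ for the linear term, and the odd-degree/parity bookkeeping with prefactor $4lm$ for the quadratic term, all as in (\ref{PRO1})--(\ref{PRO10}). The one step you flag as the main obstacle --- the combinatorial accounting of the quadratic term --- is carried out in the paper exactly as you anticipate, via a Leibniz expansion over $K\subset J$ and a case analysis on whether the distinguished index $i_0$ lies in $K$ or $J\setminus K$, ending in the count $\sum \frac{k''+1}{k'+k''}\binom{2l-1}{k'}\binom{2m-1}{k''}\le\sum\binom{2l-1}{k'}\binom{2m-1}{k''}$.
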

\begin{proof}
Because the unnormalized effective action $\tilde{F}(t,\Psi)$ satisfies the flow equation (\ref{RGT12}), it can be expressed in terms of an integral equation
\begin{eqnarray}
\tilde{F}(t,\Psi) = (\mu_{A(t)}*\tilde{F}^{(0)})(\Psi) - \frac{1}{2}\int_{0}^{t}ds (\mu_{A_{[s,t]}}*\langle \nabla \tilde{F}(s,\cdot), \dot{A}(s)\nabla \tilde{F}(s,\cdot)\rangle)(\Psi). \label{PRO1}
\end{eqnarray}
Consequently, we have
\beq
\frac{\partial \tilde{F}(t,\Psi)}{\partial \Psi_{J}}\bigg{|}_{\Psi = 0} = \E_{\mu_{A(t)}}\bigg{[}\frac{\partial \tilde{F}^{(0)}}{\partial \Psi_{J}}\bigg{]} -\frac{1}{2}\int_{0}^{t}ds \sum_{K\subset J}(-1)^{|K|}\E_{\mu_{A_{[s,t]}}}\bigg{[}\bigg{\langle} \nabla \bigg{(}\frac{\partial \tilde{F}(s,\cdot)}{\partial \Psi_{K}}\bigg{)},\dot{A}(s)\nabla \bigg{(}\frac{\partial \tilde{F}(s,\cdot)}{\partial \Psi_{J\setminus K}}\bigg{)}\bigg{\rangle}\bigg{]}, \label{PRO2}
\eeq
by Leibniz rule. Since $F$ an $\tilde{F}$ only differ by a normalization constant, it follows that $F$ must satisfy the same equation, that is
\beq
\frac{\partial F(t,\Psi)}{\partial \Psi_{J}}\bigg{|}_{\Psi = 0} = \E_{\mu_{A(t)}}\bigg{[}\frac{\partial F^{(0)}}{\partial \Psi_{J}}\bigg{]} -\frac{1}{2}\int_{0}^{t}ds \sum_{K\subset J}(-1)^{|K|}\E_{\mu_{A_{[s,t]}}}\bigg{[}\bigg{\langle} \nabla \bigg{(}\frac{\partial F(s,\cdot)}{\partial \Psi_{K}}\bigg{)},\dot{A}(s)\nabla \bigg{(}\frac{\partial F(s,\cdot)}{\partial \Psi_{J\setminus K}}\bigg{)}\bigg{\rangle}\bigg{]}. \label{PRO3}
\eeq
Notice that we are only interested in the case $|J| = 2k$ has even cardinality. Getting back to the notations introduced in (\ref{NOR6}) and expressing $F^{(0)}$ by
\begin{eqnarray}
F^{(0)}(\Psi) = \sum_{J\subset \{1,...,2n\}}\zeta^{(0)}_{J}\Psi_{J}, \label{PRO4}
\end{eqnarray}
we can readily estimate the first term in the right hand side of (\ref{PRO3})
\begin{eqnarray}
\bigg{|} \E_{\mu_{A(t)}}\bigg{[}\frac{\partial F^{(0)}}{\partial \Psi_{J}}\bigg{]}\bigg{|} \le |\sum_{K\supset J}\zeta_{K}^{(0)}\E_{\mu_{A(t)}}[\Psi_{K\setminus J}]| \le \sum_{m=k}^{n}\sum_{\substack{K\supset J \\ |K| = 2m}}|\zeta_{K}^{(0)}|\sigma_{(0,t)}^{2(m-k)}, \label{PRO5}
\end{eqnarray}
in view of (\ref{RGT8}). Hence,
\beq
\begin{split}
\sup_{i_{0}}\frac{1}{2k}\sum_{J \ni i_{0}, |J| = 2k}\bigg{|} \E_{\mu_{A(t)}}\bigg{[}\frac{\partial F^{(0)}}{\partial \Psi_{J}}\bigg{]}\bigg{|} &\le \sum_{m=k}^{n}\sup_{i_{0}}\frac{1}{2m}\sum_{K \ni i_{0}, |K| = 2m}|\zeta_{K}^{(0)}|\frac{2m}{2k}\sum_{\substack{J \subset K, |J| = 2k \\ i_{0} \in J}}\sigma_{(0,t)}^{2(m-k)} \\
&= \sum_{m=k}^{n}F_{m}^{(0)}\binom{2m}{2k}\sigma_{(0,t)}^{2(m-k)}. 
\end{split}
\label{PRO6}
\eeq
As for the second term in the right hand side of (\ref{PRO3}), we first set
\begin{eqnarray}
F(t,\Psi) = \sum_{J\subset \{1,...,2n\}}\zeta_{J}\Psi_{J}, \label{PRO7}
\end{eqnarray}
in such a way that
\beq
\begin{split}
&\bigg{|}\E_{\mu_{A_{[s,t]}}}\bigg{[}\bigg{\langle} \nabla \bigg{(}\frac{\partial F(s,\cdot)}{\partial \Psi_{K}}\bigg{)},\dot{A}(s)\nabla \bigg{(}\frac{\partial F(s,\cdot)}{\partial \Psi_{J\setminus K}}\bigg{)}\bigg{\rangle}\bigg{]} \bigg{|} 
\\
&\le \sum_{i,j}|\dot{a}_{ij}|\sum_{L \supset K\cup \{i\}}|\zeta_{L}|\sum_{M\supset (J\setminus K)\cup \{j\}}|\zeta_{M}|\E_{\mu_{A_{[s,t]}}}[ \Psi_{L\setminus (K\cup \{i\})}\wedge \Psi_{M\setminus ((J\setminus K)\cup \{j\})}] \\
& \le\sum_{i,j}|\dot{a}_{ij}|\sum_{L\supset K\cup \{i\}}|\zeta_{L}|\sum_{M\supset (J\setminus K)\cup\{j\}}|\zeta_{M}|\sigma_{(s,t)}^{|L|+|M|-|K|-|J\setminus K|-2}\chi_{L\cap M = \emptyset} 
\end{split}
\eeq
where $\chi_{L\cap M=\emptyset}$ indicates that the sum ranges over disjoint sets $L$ and $M$. Summing over $K\subset J$, with $|J| = 2k$, the former is itself majorized by
\beq
\begin{split}
&\sum_{i,j}|\dot{a}_{ij}|\sup_{i_{0}}\frac{1}{2k}\sum_{\substack{J \ni i_{0}\\|J|=2k}}\sum_{K\subset J}\sum_{L\supset K\cup \{i\}}|\zeta_{L}|\sum_{M\supset (J\setminus K)\cup \{j\}}|\zeta_{M}|\sigma_{(s,t)}^{|L|+|M|-|K|-|J\setminus K|-2}\nonumber \\
&\le \sum_{i,j}|\dot{a}_{ij}|\sup_{i_{0}}\frac{1}{2k}\sum_{\substack{J\ni i_{0}\\|J|=2k}}\sum_{K\subset J}\sum_{\substack{l,m\\ l+m\ge k+1}}\sum_{\substack{L\supset K\cup \{i\}\\|L|=2l}}|\zeta_{L}|\sum_{\substack{M\supset (J\setminus K)\cup\{j\}\\|M|=2m}}|\zeta_{M}|\sigma_{(s,t)}^{2l+2m-|K|-|J\setminus K|-2}. \label{PRO8}
\end{split}
\eeq
In (\ref{PRO8}), $i_{0}\in J$ could be an element of either $K$ or $J\setminus K$. If $i_{0} \in K$, (\ref{PRO8}) is further majorized by
\beq
\sup_{\tilde{j}}\sum_{i=1}^{2n}|\dot{a}_{i\tilde{j}}(s)| \sum_{\substack{l,m \\ l+m \ge k+1}} 4lm\; \sigma_{(s,t)}^{2l+2m-2k-2}\sup_{i_{0}}\frac{1}{2l}\sum_{\substack{L\ni i_{0}\\|L|=2l}}|\zeta_{L}|\sup_{j_{0}}\frac{1}{2m}\sum_{\substack{M\ni j_{0}\\|M|=2m}}|\zeta_{M}|\sum_{j\in L}\frac{1}{2k}\sum_{\substack{K\subset L\setminus \{i\}\\K'\subset M \setminus\{j\}\\ K\cup K'=J\ni i_{0}\\ |J|=2k}}1 \label{PRO9}
\eeq
If, on the other hand, $i_{0} \in J\setminus K$ then instead of the above majorant, (\ref{PRO8}) is majorized by
\beq
\sup_{\tilde{i}}\sum_{j=1}^{2n}|\dot{a}_{\tilde{i}j}(s)| \sum_{\substack{l,m \\ l+m \ge k+1}}4lm\;\sigma_{(s,t)}^{2l+2m-2k-2}\sup_{j}\frac{1}{2l}\sum_{\substack{L\ni j\\|L|=2l}}|\zeta_{L}|\sup_{j_{0}}\frac{1}{2m}\sum_{\substack{M\ni j_{0}\\|M|=2m}}|\zeta_{M}|\sum_{i\in L}\frac{1}{2k}\sum_{\substack{K\subset L\setminus \{j\}\\K'\subset M \setminus\{i\}\\ K\cup K'=J\ni i_{0}\\ |J|=2k}}1. \label{PRO10}
\eeq
In either case, both expressions (\ref{PRO9}) and (\ref{PRO10}) are bounded by 
\begin{eqnarray}
||\dot{A}||\sum_{\substack{l,m \\ l+m\ge k+1}} F_{l}(s)F_{m}(s)\Gamma_{l,m}(\sigma_{(s,t)}). \nonumber
\end{eqnarray}
Indeed, when $i_{0} \in K$ we get
\begin{align}
\sum_{\substack{\textup{$k'$,$k''$ odd}\\ k+k'=2k}}\sum_{j\in L}\frac{1}{2(k'+k'')}\sum_{\substack{K\subset L\setminus\{i\}\\|K|=k'\\ K\ni i_{0}}}\sum_{\substack{K'\subset M\\|K'|=k''+1\\K'\ni j}}1 &= \sum_{\substack{\textup{$k'$,$k''$odd}\\k'+k''=2k}}\frac{k''+1}{k'+k''}\binom{2l-1}{k'}\binom{2m-1}{k''} \nonumber \\
&\le \sum_{\substack{\textup{$k'$,$k''$ odd}\\k'+k''=2k}}\binom{2l-1}{k'}\binom{2m-1}{k''} \nonumber
\end{align}
and a similar argument proves this inequality holds when $i_{0} \in J\setminus K$.  
\end{proof}


\newcommand{\vphiz}{\vphi_0} 
\section{Hamilton-Jacobi theory and the proof of Theorem \ref{newmajorant}}\label{new-AppendixB}

Here we study (\ref{HamJacaux}) and prove existence, uniqueness, and analyticity, of its solution. 
The $t$-dependence in the function $\psi$ that is to solve (\ref{HamJacaux}) only arises from the initial condition. We now consider the initial value problem 
\beq
 \begin{split}\label{HamJacaux-bis}
 \frac{\del \vphi}{\del s} (s,z)
& =
 \frac12 
 \Vert\dot{A}(s)\Vert
\left(
\frac{\partial \vphi(s,z)}{\partial z}
\right)^2   
\\
 \vphi(0,z) 
 &= 
\vphiz (z) \; ,
 \end{split}
 \eeq
assuming that $\vphiz$ is even and analytic in some disk $|z| < R$ around zero, and has nonnegative coefficients when expanded in powers of $z$. The $t$-dependent initial condition of (\ref{HamJacaux}) can be accommodated by choice of $\vphiz$. We first change variables 
\beq\label{varchange}
t \to \tau (t) = \int_0^t \Vert \dot A (s)\Vert \; \dd s\; .
\eeq
Assuming that $\Vert \dot A (s)\Vert  > 0$ for all $s$, this is a regular change of coordinates, and it transforms 
(\ref{HamJacaux-bis}) to 
 \beq\label{HamJac0}
\frac{\del \vphi}{\del s} (s,z)
=
\frac12 \left( \frac{\del \vphi}{\del z } (s,z)\right)^2
\qquad
\vphi(0,z)
=
\vphiz(z) \; .
\eeq
(where we have replaced $\tau(s)$ again by $s$ in the notation, to avoid a proliferation of new notations in this section). 
We briefly recall the method of characteristics which is used to solve this equation:
If (\ref{HamJac0}) has a solution $\vphi$ that is $C^2$ in $z$, then $u(s,z) =\frac{\del \vphi}{\del z} (s,z) $ solves
\beq\label{HamJac1}
\dot u - u u' = 0 \; , \qquad 
u_0 (z)  = \vphiz'(z)\;.
\eeq
where we have denoted the $s$-derivative by a dot and the $z$ derivative by a prime.  
Let $w$ be in the domain of analyticity of $u_0$. Assume that there is a $C^1$-function $s \mapsto z(s)$ with $z(0)=w$ that solves
\beq
\frac{\dd z}{\dd s} (s) = - u(s, z(s))
\eeq
and set $U(s) = u(s,z(s))$. Then 
\beq
\frac{\dd U}{\dd s} =
\dot u + \dot z\; u'
=
\dot u - u \; u' 
=
0
\eeq
by (\ref{HamJac1}). Thus $U$ is constant: $U(s) = U(0) = u_0(w)$, and $\dot z (s) = - u_0 (w)$, hence
\beq\label{HamJac5}
z(s)
=
w - u_0(w) \; s \; .
\eeq 
The idea is that if (\ref{HamJac5}) can be solved for $w$ as a function of $s$ and $z$, this gives a solution of (\ref{HamJac1}). This is possible for small enough $|s|$ because $u_0$ is analytic in $z$, $z(0)=w$, and 
\beq\label{HamJac5d}
\frac{\del z}{\del w}
=
1 - s\; u'_0(w) \; ,
\eeq
so $\frac{\del z}{\del w} \ne 0$ if $s < \frac{1}{|u_0'(w)|}$. By (\ref{HamJac5}), $z$ is also analytic (since affine) in $s$. 

\begin{lemma}\label{lemb1}
For $z_0 \in \bC$ and $r > 0$ denote $B_r (z_0) = \{ z \in \bC: |z-z_0| < r\}$. 

\begin{enumerate}

\item
There are $s_0>0$ and $r_0>0$ and a unique function $\omega: (-s_0,s_0) \times B_{r_0} (0) \to \bC$, $(s,z) \mapsto \omega (s,z) $ such that $w = \omega(s,z)$ solves  $(\ref{HamJac5})$, i.e.
\beq\label{HamJac6}
z
=
\omega (s,z) - s\; u_0(\omega (s,z)) \; ,
\eeq
and $\omega$ is $C^1$ in $s$ and analytic in $z$ for $|z| < r_0$. 
\item
The function 
\beq\label{usol}
u(s,z)
=
u_0 (\omega (s,z))
\eeq
then solves $(\ref{HamJac1})$. 

\end{enumerate}

\end{lemma}

\begin{proof}
(1) The function $\omega$ exists by the inversion theorem for analytic functions in a sufficiently small interval $(-s_0,s_0)$ and it is $C^1 $ (in fact, analytic) there.

(2) Differentiation of (\ref{HamJac6}) with respect to $s$ and $z$ gives
\beq
\begin{split}
0
&=
\dot \omega - u_0 (\omega) -s u_0'(\omega) \; \dot \omega
=
- u_0 (\omega) + \dot \omega ( 1 -s u_0'(\omega) ) 
\\
1
&=
\omega' - s u_0' (\omega) \; \omega'
=
\omega'  (1 - s u_0' (\omega)) \; .
\end{split}
\end{equation}
so $\dot \omega = u_0(\omega) \; \omega'  = u \omega'$ by (\ref{usol}), 
which implies (\ref{HamJac1}).
\end{proof}

More detailed estimates for the radius of convergence in $z$ and for analyticity in $s$ are a nice application of B\" urmann-Lagrange inversion, which goes as follows. 

\begin{lemma}
Let $R > 0$ and $\vphiz$ be analytic in $z$ for $|z| < R$. Furthermore, let $\vphiz$ be even in $z$ and $\vphiz(0) = 0$.
Then there are $s_1 > 0$ and $r_1 > 0$ such that the initial-value problem $(\ref{HamJac0})$ has a unique solution 
\beq
\vphi: B_{s_1}(0) \times B_{r_1} (0) \to \bC\;,
\qquad
(s,z) \mapsto \vphi (s,z) \;,
\eeq
and $\vphi$ is analytic both in $s$ and in $z$ on its domain of definition. In particular, for all $n \in \bN_0$ the expansion coefficients
$\vphi_n (s)$ in 
\beq
\vphi (s,z) 
=
\sum_{n=0}^\infty 
\vphi_n (s) \; z^n
\eeq
are analytic functions of $s$ on $\{ s \in \bC: |s| < s_1\}$. 

If $\vphi_n (0) =0 $ for odd $n$, then $\vphi_n (s) = 0 $ for odd $n$ and all $ s \in B_{s_1} (0) $.

If $\vphi_n (0) \ge 0$ for all $n \in \bN$ , then $\vphi_n (s)  \ge 0$ for all $s \in [0,s_1)$ and all $n \in \bN$. 

\end{lemma}

\begin{proof}
The function $u_0 = \vphiz'$ is analytic on $B_R(0)$. Call 
\beq
f_s (w) = w - u_0 (w) \; s 
\eeq
We fix some $z_0$ and search for a solution curve $w \mapsto f_s (w)$ that contains $z_0$ for all sufficiently small $s$, i.e.\ some $w_0 (s)$ for which $f_s (w_0) = z_0$. Choose $z_0 = 0$. 
Because by hypothesis $\vphiz$ is even in $z$, $u_0 (0) = 0$, so $f_s (0) =0$ for all $s$, so $w_0 = 0$ satisfies $f_s (w_0) = 0$ for all $s$.\footnote{If this assumption is not made, then $w_0$, satisfying $f_s (w_0) = z_0$ will depend on $s$ because we have fixed $z_0$ independent of $s$. The condition then reads $ z_0 = 0  = w_0 - s u_0 (w_0) = u_{0,0} + w_0 (1- s u_{0,1}) + O(w_0^2)$.which can always be solved for $w_0$ if $s$ is small enough, even if $u_{0,0}=0$.}
As stated in Lemma \ref{lemb1}, because $\frac{\dd f_s}{\dd w} (0) = 1 - s \; u_0'(0) \ne 0$ for $|s| < \frac{1}{|u_0'(0)|}$, the inversion theorem for analytic functions applies, and there is an $s'$, $0< s' <  \frac{1}{|u_0'(0)|}$ and $r'>0$ such that for all $|s| < s'$,
the inverse function $g_s$ satisfying $g_s (f_s (w))=w$ exists and is unique and analytic on $\{ z \in \bC: |z|<r'\}$.  For all these $z$, 
\beq
g_s (z) 
=
\sum_{n=0}^\infty g_n (s) \; z^n\;, 
\eeq
By construction, $g_0(s) = w_0 = 0$. For $n \ge 1$, we use the B\"urmann-Lagrange formula
\beq
g_n (s) 
=
\frac{1}{2\pi \I n}
\int_{|w|=\delta} \frac{\dd w}{f_s(w)^n}
=
\frac{1}{2\pi \I n}
\int_{|w|=\delta} \frac{\dd w}{(w- s u_0(w))^n}
\eeq
where $\delta > 0$ is so small that $\inf_\theta |f_s (\delta e^{\I \theta})| > 0$. Since 
\beq
w- s u_0(w)
=
w (1 - s u_0'(0)) + s w^2 \; R_1 (w)
\eeq
where the Taylor remainder $R_1$ is analytic in $w$, 
\beq
g_n (s) 
=
\frac{1}{(1 - s u_0'(0))^n} \;
\frac{1}{2\pi \I n} 
\int_{|w|=\delta} 
\frac{\dd w}{(w^n (1 + \rho(s,w))^n}
\eeq
with 
\beq\label{rhodef}
\rho(s,w) = \frac{sw R_1(w)}{1 - s u_0'(0)}\; .
\eeq
Because there is $0 < s'' < s'$ such that $|\rho(s,w)| < 1$ for all $|s| < s''$ and all $|w|=\delta$, it is now explicit by geometric series expansions that for all $n \in \bN$, the coefficients $g_n$ are analytic functions of $s$ on $B_{s''}(0)$. By compact convergence of the power series in $z$, this implies that $s \mapsto g_s (z)$ is analytic. 

We now choose $s_1 > 0$ so small that $s_1 < s''$. 

Because convergent power series can be integrated termwise, 
\beq
\vphi (s,z) 
=
\sum_{n=0}^\infty \vphi_n (s) \; z^n
\eeq
with $\vphi_n (s) = \frac{1}{n} g_{n-1} (s)$. Thus also all $\vphi_n$ are analytic on the same disk $\{ s \in \bC : |s| < s''\}$, and the statement about analyticity of $\vphi$ is proven. 

The differential equation preserves evenness of $\vphi$ in $z$, so the statement of the lemma about the coefficients for odd $n$ is obvious. 

It remains to prove that the coefficients are nonnegative. 
The differential equation and a comparison of coefficients of powers of $z^n$ imply that 
\beq
\frac{\dd \vphi_m}{\dd s} (s) 
=
\sum_{\ell=1}^m \ell \; (m-\ell+1)\;\vphi_\ell (s) \; \vphi_{m-\ell+1} (s) \; .
\eeq 
Similarly, the derivative of order $n$ is a linear combination of lower derivatives with nonnegative coefficients. Explicitly, by Leibniz's rule
\beq\label{leibspeis}
\frac{\dd^n \vphi_m}{\dd s^n} (s) 
=
\sum_{\ell=1}^m \ell \; (m-\ell+1)\;
\sum_{\nu =1}^{n-1}
{n-1 \choose \nu}\;
\frac{\dd^{n-1-\nu} \vphi_\ell}{\dd s^{n-1-\nu}} (s) \; 
\frac{\dd^\nu \vphi_{m-\ell+1}}{\dd s^\nu} (s) \; .
\eeq
Let $I=[0,s_1]$. We now show that the set $P$ of those $s$ for which $\frac{\dd^n \vphi_m}{\dd s^n} (s) \ge 0$ for all $m$ and $n$ is nonempty, and both open and closed as a subset of $I$. Because $I$ is connected, $P=I$ must then hold, which implies the statement of the lemma about positivity. Let
\beq
P_{m,n} = \{ s \in [0,s_1]: \sfrac{\dd^n \vphi_m}{\dd s^n} (\sigma) \ge 0 \mbox{ for all } \sigma \in [0,s]\}
\eeq
and $P = \bigcap_{m,n \ge 0} P_{m,n}$.
By hypothesis on the initial condition, $0 \in P$. 
Because every $\sfrac{\dd^n \vphi_m}{\dd s^n}$ is continuous in $s$, $P_{m,n}$ is a closed subset of $I$, hence $P$ is closed in $I$. Now $P \subset I$, and $I$ is a compact subset of $ \{ s \in \bC : |s| < s''\}$, so it has positive distance to the boundary of $ \{ s \in \bC : |s| < s''\}$. Thus there is $\veps > 0$ such that for all $s \in P$, all $|h| <\veps$, and all $m \in \bN$, the expansion
\beq\label{sesum}
\vphi_m (s+h)
=
\sum_{n=0}^\infty 
\frac{1}{n!}\;
\frac{\dd^n \vphi_m}{\dd s^n} (s) \; h^n
\eeq 
converges. (Note that $\veps$ does not depend on $n$.) Because $s \in P$, and by (\ref{leibspeis}), $\frac{\dd^n \vphi_m}{\dd s^n} (s) \ge 0$. Thus the sum in (\ref{sesum}) and all its $s$-derivatives are positive for $h > 0$, so  $s+h \in P$ for all $h \in [0,\veps)$. Moreover, $[s,s-\veps) \cap I \subset P$ by definition of $P$. Thus $(s-\veps,s+\veps) \cap I \subset P$, hence $P$ is open in $I$. 
\end{proof}

We now return to the $t$-dependent initial condition. 
We can choose $s_1$ so small that for all $t \in [0,s_1]$, $\sigma_{(0,t)}$ is less than $R_0/2$, where $R_0$ is the radius of convergence of the initial condition $\Fphz$. Then $\vphiz$ has analyticity radius $R \ge R_0/2$, and the dependence on $\sigma_{s,t}^2$ is analytic, too. Choose $r_1 \le \min \{ r',R\}$.

Thus by choosing $t_0 = s_1$ sufficiently small, we have obtained a unique solution to (\ref{majaux}). By setting $s=t$, we then obtain a solution of  (\ref{majeq}). Analyticity implies that the sequence of $\phi_k (s)$ that solves (\ref{phi-eq}) exists and is unique. Thus Theorem \ref{newmajorant} holds.

\end{appendices}

\bigskip
\noindent
{\bf Acknowledgement. } The authors would like to thank David Brydges for his insightful comments and suggestions on the content of this paper, and the referee for pointing out \cite{Greenblatt}. WK would like to thank Conselho Nacional de Desenvolvimento Cient\'ifico e Tecnol\' ogico (CNPq) for the financial support during his PhD. This study was financed in part by the Coordena\c c\~ ao de Aperfei\c coamento de Pessoal de N\' ivel Superior (CAPES) - Finance Code 001. This work is supported by Deutsche Forschungsgemeinschaft (DFG, German Research Foundation) under Germany's Excellence Strategy  EXC-2181/1 - 390900948 (the Heidelberg STRUCTURES Cluster of Excellence).

The authors declare no competing interests. No datasets were generated or analysed during the current study. 


\end{document}